\def\BibTeX{{\rm B\kern-.05em{\sc i\kern-.025em b}\kern-.08em
    T\kern-.1667em\lower.7ex\hbox{E}\kern-.125emX}}
\newtheorem{definition}{Definition}\newtheorem{theorem}{Theorem}
\newtheorem{proof}{Proof}
\begin{document}

\title{Explaining Aggregates for Exploratory Analytics
}

\author{\IEEEauthorblockN{Fotis Savva}
\IEEEauthorblockA{\textit{University of Glasgow, UK}\\
f.savva.1@research.gla.ac.uk}
\and
\IEEEauthorblockN{Christos Anagnostopoulos}
\IEEEauthorblockA{\textit{University of Glasgow, UK}\\
christos.anagnostopoulos@glasgow.ac.uk}
\and
\IEEEauthorblockN{Peter Triantafillou}
\IEEEauthorblockA{\textit{University of Warwick, UK}\\
p.triantafillou@warwick.ac.uk}
}

\maketitle

\begin{abstract}
Analysts wishing to explore multivariate data spaces, typically pose queries involving selection operators, i.e., range or radius queries, which define data subspaces of possible interest and then use aggregation functions, the results of which determine their exploratory analytics interests. However, such aggregate query (AQ) results are simple scalars and as such, convey limited information about the queried subspaces for exploratory analysis. We address this shortcoming aiding analysts to explore and understand data subspaces by contributing a novel explanation mechanism coined XAXA: eXplaining Aggregates for eXploratory Analytics. XAXA's novel AQ explanations are represented using functions obtained by a three-fold joint optimization problem. Explanations assume the form of a set of parametric piecewise-linear functions acquired through a statistical learning model. 
A key feature of the proposed solution is that model training is performed by only monitoring AQs and their answers on-line. 
In XAXA, explanations for future AQs can be computed without any database (DB) access and can be used to further explore the queried data subspaces, without issuing any more queries to the DB. 
We evaluate the explanation accuracy and efficiency of XAXA 
through theoretically grounded metrics over real-world and synthetic datasets and query workloads.
\end{abstract}

\begin{IEEEkeywords}
Exploratory analytics, predictive modeling, explanations, learning vector quantization, query-driven analytics.   
\end{IEEEkeywords}

\section{Introduction}
In the era of big data, analysts wish to explore and understand data subspaces in an efficient manner. The typical procedure followed by analysts is rather ad-hoc and domain specific, but invariantly includes the fundamental step of \textit{exploratory analysis} \cite{IPC15}. 

Aggregate Queries (AQs), e.g., \texttt{COUNT}, \texttt{SUM}, \texttt{AVG}, play a key role in exploratory analysis as they summarize regions of data. Using such aggregates, analysts decide whether a data region is of importance, depending on the task at hand. However, AQs return scalars (single values) possibly conveying little information. For example, imagine checking whether a particular range of zip codes is of interest, where the latter depends on the count of persons enjoying a `high' income: if the count of a particular subspace is 273, what does that mean? If the selection (range) predicate was less or more selective, how would this count change? What is the data region size with the minimum/maximum count of persons? Similarly, how do the various subregions of the queried subspace contribute to the total count value and to which degree?
To answer such exploratory questions and enhance analysts' understanding of the queried subspace, one need to issue more queries. The analyst has no understanding of the space that would steer them in the right direction w.r.t. which queries to pose next; as such, further exploration becomes ad-hoc, unsystematic, and uninformed. 
To this end, the main objective of this paper is to find ways to assist analysts understanding and analyzing such subspaces. A convenient way to represent how something is derived (in terms of compactly and succinctly conveying rich information) is adopting regression models. In turn, the regression model can be used to estimate query answers. This would lead to fewer queries issued against the system saving system resources
and drastically reducing the time needed for exploratory analysis.

\subsection{Motivations and Scenarios}
We focus on AQs with a \textit{center-radius selection} operator (CRS) because of their wide-applicability in analytics tasks and easy extension to 1-d range queries. A CRS operator is defined by a multi-dimensional point (center) and a radius. Such operator is evident in many applications including: location-based search, e.g searching for spatially-close (within a radius) objects, such as astronomical objects, objects within a geographical region etc. In addition, the same operator is found within social network analytics, e.g., when looking for points within a similarity distance in a social graph. Advanced analytics functions like variance and statistical dependencies are applied over the data subspaces defined by CRS operators. 

\textbf{Scenario 1: Crimes Data.} Consider analyzing crimes' datasets,
containing recorded crimes, along with their location and the type of crime (homicide, burglary, etc.), like the Chicago Crimes Dataset \cite{crimesdata}. A typical exploration is to issue the AQ with a CRS operator: 
\begin{verbatim}
SELECT COUNT(*) AS y FROM Crimes AS C
WHERE $theta>sqrt(power($X-C.X,2)
+power($Y-C.Y,2));
\end{verbatim}
The multi-dimensional center is defined by (\texttt{\$X}, \texttt{\$Y}) and radius by \texttt{\$theta}. Such AQ returns the number of crimes in a specific area of interest located around the center defining a data subspace corresponding to an arbitrary neighborhood. 
Having a regression function as an explanation for this AQ, the analyst could use it to further understand and estimate how the queried subspace and its subregions contribute to the aggregate value. For instance, Figure \ref{fig:vis-rate} (left) depicts the specified AQ as the out-most colored circle, where \textit{(Lat, Lon)} blue points are the locations of incidents reported from the Chicago Crimes Dataset. The different colors inside the circle denote the different rates at which the number of crimes increases as the region gets larger. Thus, consulting this visualisation the analyst can infer the contributions of the different subspaces defined as the separate concentric circles. This solves two problems coined the \textit{query-focus problem} and the \textit{query-space size problem}. The query-focus problem denotes an inappropriate central focus by the multi-dimensional center. In the case at hand, a low count for the smallest concentric circle denotes a center of low importance and that the analyst should shift focus to other regions. In contrast, a low count for the largest concentric circle would mean an inappropriate original \texttt{\$theta} and that the analyst should continue exploring with smaller \texttt{\$theta} or at different regions.
As will be elaborated later, an explanation regression function will facilitate the further exploration of the aggregate value for subregions of smaller or greater sizes without issuing additional AQ (SQL) queries, but instead simply plugging different parameter values in the explanation function.

\textbf{Scenario 2: Telecommunication Calls Data.} Consider a scientist tasked with identifying time-frames having high average call times. She needs to issue AQs for varying-sized ranges over time, such as this SQL range-AQ. 
\begin{verbatim}
SELECT AVG(call_time) AS y FROM TCD AS C
WHERE C.time BETWEEN 
$X-$theta AND $X+$theta
\end{verbatim}
Discovering the aforementioned time-frames, without our proposed explanations, can be a daunting task as multiple queries have to be issued, overflowing the system with a number of AQs. 
Beyond that, analysts could formulate an optimization problem that could be solved using our methodology. 
Given a function, the maxima and minima points can be inferred, thus, the analyst can easily discover the parameter at which the AQ result is maximized or minimized. Again, such functionality is very much lacking and are crucial for exploratory analytics.


\subsection{Problem Definition}
\label{sec:prob_def}
We consider the result of an AQ stemming from a regression function, defined from this point onwards as the \textit{true} function. With that true function fluctuating as the parameter values of an AQ (e.g \texttt{\$X, \$Y, \$theta}) change. Thus, we are faced with the problem of approximating the true function with high accuracy. Using the aforementioned facts we construct an optimization problem 
to find an optimal approximation to the \textit{true} function. In turn, this problem is further deconstructed into two optimization problems. The first being the need to find optimal query parameters representing a query space. The described AQs often form clusters as a number of AQs are issued with similar parameters. This is evident from Figure \ref{fig:vis-rate} (right) displaying AQs (from real-world analytical query workload \cite{szalay2002sdss}) forming clusters in the query space. AQs clustered together will have similar results, thus similar true functions. Consequently, we are interested in finding a set of optimal parameters to represent queries in each one of those clusters, as this will be more accurate than one set of global parameters\footnote{This basically represents all queries in a cluster by one query.}. The second part of the deconstructed optimization problem is approximating the true function(s) using a known class of functions over the clustered query space. To this end, we use piecewise-linear functions that are fitted over the clustered query space. In addition, we consider both historically observed queries and incoming queries to 
define a third joint optimization problem which is solved in an on-line manner by adapting both the obtained optimal parameters and the approximate explanation function.     

\begin{figure}
\begin{center}
\begin{tabular}{cc}
\includegraphics[height=4cm,width=5cm]{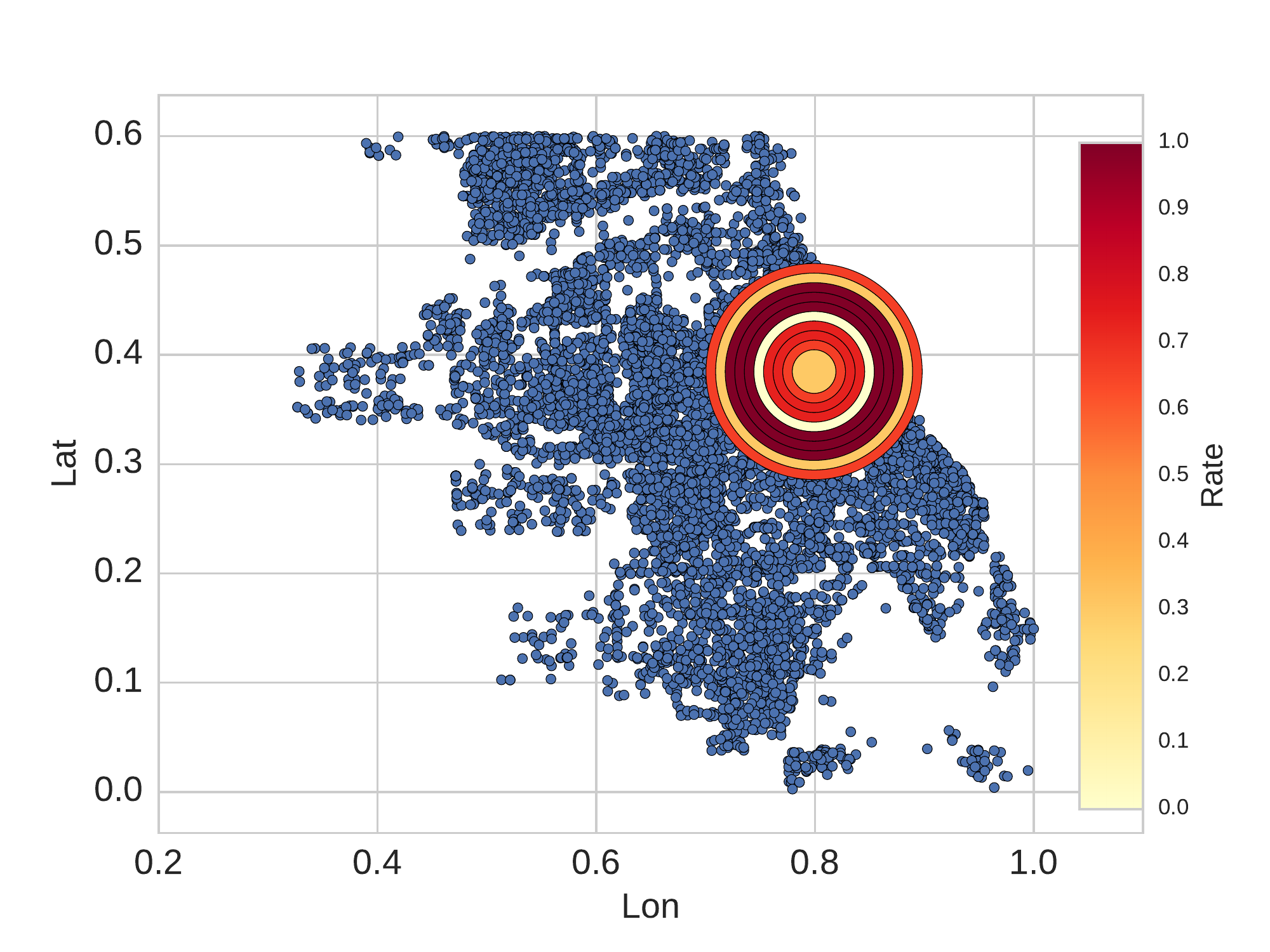}
\includegraphics[height=4cm,width=4cm]{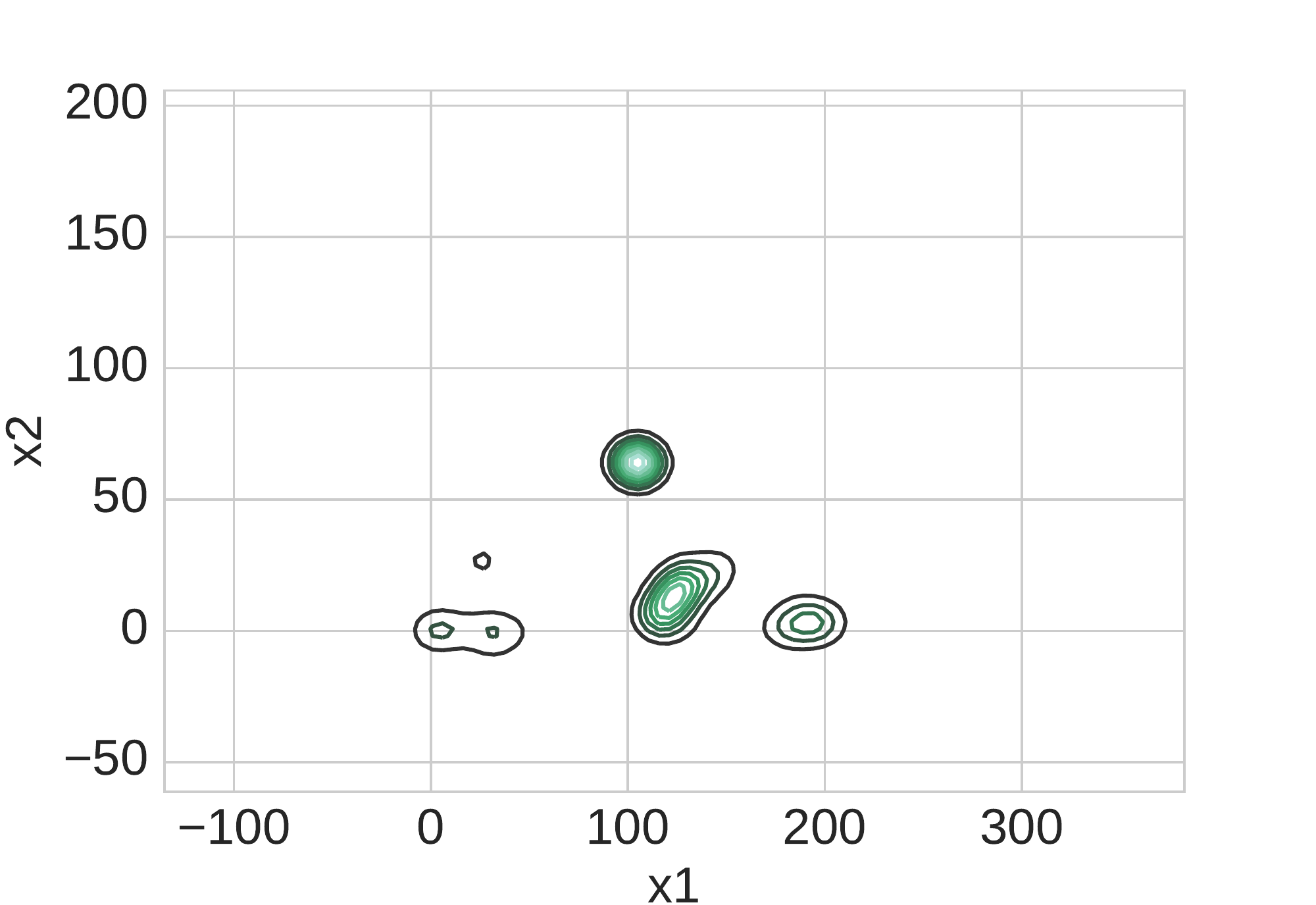}
\end{tabular}
\caption{(Left) Scenario 1: the blue points are locations of reported incidents; the circle on right is the AQ with a CRS operator with the varying rate-of-increase shown as color coded concentric circles; 
(right) Real workload cluster analysis (Source SDSS \cite{szalay2002sdss}); $x_{1}$ and $x_{2}$ are parameters of a CRS operator.}
\label{fig:vis-rate}
\end{center}
\end{figure}
\section{Related Work \& Contribution}

XAXA aims to provide explanations for AQ, whose efficient computation has been a major research interest \cite{HHW97, CDS04,CM05,WOT10, HBY13, AT17, AT15, WWDI17, agarwal2013blinkdb, VerdictDB}, with methods applying sampling, synopses and machine learning (ML) models to compute such queries.
Compared to XAXA, the above works are largely complementary. 
However, the distinguishing feature of XAXA is that its primary task is to \textit{explain the AQ results} and do so efficiently, accurately, and scalably w.r.t. increasing data sizes. Others also focus on the task of assisting an analyst by building systems for efficient visual analytics systems \cite{vartak2015s,demiralp2017foresight}, however we note that our work is different in that it offers explanations for aggregates (as functions) that can be visualised (as the aforementioned works) but can also be used for optimization problems and for answering subsequent queries.

Explanation techniques have emerged in multiple contexts within the data management and ML communities. Their origin stems from data provenance \cite{cheney2009provenance}, but have since departed from such notions, and focus on explanations in varying contexts. One of such contexts is to provide explanations, represented as predicates, for simple query answers as in \cite{el2014interpretable, roy2015explaining,meliou2014causality}.
Explanations have been also used for interpreting outliers in both \textit{in-situ} data \cite{wu2013scorpion} and in streaming data \cite{bailis2016macrobase}. 
In \cite{nair2015learning}, the authors build a system to provide interpretations for service errors and present the explanations visually to assist debugging large scale systems. In addition, the authors of PerfXPlain \cite{khoussainova2012perfxplain} created a tool to assist users while debugging performance issues in Map-Reduce jobs. Other frameworks provide explanations used by users to locate any discrepancies found in their data \cite{wang2015data}, \cite{chalamalla2014descriptive}, \cite{wang2016qfix}. A recent trend is in explaining ML models for debugging purposes \cite{krishnan2017palm} or to understand how a model makes predictions \cite{sundararajan2017axiomatic} and conveys trust to users using these predictions \cite{ribeiro2016should}.


In this work, we focus on explanations for AQs because of the AQs' wide use in exploratory analytics \cite{WWDI17}. Both \cite{wu2013scorpion} and \cite{amsterdamer2011provenance} focused on explaining aggregate queries. 
Our major difference is that we explain these AQs solely based on the input parameters and results of previous and incoming queries, thus not having to rely on the underlying data, which makes generating explanations slow and inefficient for large-scale datasets. 


Scalability/efficiency is particularly important: Computing  explanations is proved to be an NP-Hard problem \cite{wang2015data} and generating them can take a long time \cite{wu2013scorpion,roy2015explaining,el2014interpretable} even with modest datasets. An exponential increase in data size 
implies a dramatic increase in the time to generate explanations. 
XAXA does not suffer from such limitations and is able to construct explanations in a matter of milliseconds even with massive data volumes. This is achieved due to two principles: 
First, on workload characteristics: workloads contain a large number of overlapping queried data subspaces, which has been acknowledged and exploited by recent research e.g., \cite{WWDI17},\cite{VerdictDB}, STRAT\cite{strat}, and SciBORQ \cite{sciborq} and has been found to hold in real-world workloads involving exploratory/statistical analysis, such as in the Sloan Digital Sky Survey and in the workload of SQLShare \cite{sqlshare}. Second, XAXA relies on a novel ML model that exploits the such workload characteristics to perform efficient and accurate (approximate) AQ explanations by monitoring query-answer pairs on-line. 

Our \textbf{contributions} are: 
\begin{enumerate}[leftmargin=*]
\item Novel explanation representation for AQs in exploratory analytics via optimal regression functions derived from a three-fold joint optimization problem.
\item A novel regression learning vector quantization statistical model for approximating the explanation functions. 
\item Model training, AQ processing and explanation generation  algorithms requiring \textit{zero} data access.
\item Comprehensive evaluation of the accuracy of AQ explanations in terms of efficiency and scalability, and sensitivity analysis.
\end{enumerate}

\section{Explanation Representation}
\label{sec:overal_idea}
\subsection{Vectorial Representation of Queries}
\label{sec:queries}
Let $\mathbf{x}= [x_{1}, \ldots, x_{d}] \in \mathbb{R}^{d}$ denote a  random row vector (data point) in the $d$-dimensional data space $\mathbb{D} \subset \mathbb{R}^{d}$. 

\begin{definition}
The $p$-norm ($L_{p}$) distance between 
two vectors $\mathbf{x}$ and $\mathbf{x}'$ from $\mathbb{R}^{d}$ 
for $1 \leq p < \infty$, is $\lVert \mathbf{x} - \mathbf{x}' \rVert_{p} = (\sum_{i=1}^{d} |x_{i}-x_{i}'|^{p})^{\frac{1}{p}}$ and for $p = \infty$, is $\lVert \mathbf{x} - \mathbf{x}' \rVert_{\infty} = \max_{i=1, \ldots, d} \{ |x_{i}-x_{i}'| \}$. 
\label{definition:1}
\end{definition}

Consider a scalar $\theta >0$, hereinafter referred to as \textit{radius}, and a dataset $\mathcal{B}$ consisting of $N$ vectors $\{\mathbf{x}_{i}\}_{i=1}^{N}$.

\begin{definition}(Data Subspace)
Given $\mathbf{x} \in \mathbb{R}^{d}$
and scalar $\theta$, being the parameters of an AQ with a CRS operator, a data subspace $\mathbb{D}(\mathbf{x},\theta)$
is the convex subspace of $\mathbb{R}^{d}$,
which includes vectors $\mathbf{x}_{i}: \lVert \mathbf{x}_{i} - \mathbf{x} \rVert_{p} \leq \theta$ with $\mathbf{x}_{i} \in \mathcal{B}$. The region enclosed by a CRS operator is the referred data subspace.
\label{definition:2}
\end{definition}


\begin{definition} (Aggregate Query)
Given a data subspace $\mathbb{D}(\mathbf{x},\theta)$ an AQ $\mathbf{q} = [\mathbf{x},\theta]$ with center $\mathbf{x}$ and radius $\theta$ is represented via a regression function over $\mathbb{D}(\mathbf{x},\theta)$, that produces a response variable $y = f(\mathbf{x},\theta)$ which is the result/answer of the AQ. 
We notate with $\mathbb{Q} \subset \mathbb{R}^{d+1}$ the query vectorial space. 
\end{definition}

For instance, in the case of \texttt{AVG}, the function $f$ can be approximated by $f(\mathbf{x},\theta)= \mathbb{E}[y|\mathbf{x} \in \mathbb{D(\mathbf{x},\theta)}]$, where $y$ is the attribute we are interested in e.g., \textit{call\_time}.



\begin{definition}(Query Similarity)
The $L_{2}^{2}$ distance or similarity measure between AQs $\mathbf{q}, \mathbf{q}' \in \mathbb{Q}$
is $\lVert \mathbf{q} - \mathbf{q}' \rVert_{2}^{2} = \lVert \mathbf{x} - \mathbf{x}' \rVert_{2}^{2} + (\theta-\theta')^{2}$. 
\label{definition:5}
\end{definition}

\subsection{Functional Representation of Explanations}
The above defined AQ return a single scalar value $y = f(\mathbf{x},\theta)$ to the analysts. 
We seek to explain how such values are generated by finding a \textit{function} $f: \mathbb{R} \times \mathbb{R}^{d} \to \mathbb{R}$ that can describe how $y$, is produced given an ad-hoc query $\mathbf{q} = [\mathbf{x}, \theta]$ with $\mathbf{x} \in \mathbb{R}^{d}$. 
Given our interest in parameter $\mathbf{x}$, we desire explaining the evolution/variability of output $y$ of query $\mathbf{q}$ as the radius $\theta$ varies, that is to \textit{explain} the behavior of $f$ in an area around point $\mathbf{x}$ with variable radius $\theta$. 
Therefore, given a parameter of interest $\mathbf{x}$ we define a parametric function $f(\theta; \mathbf{x})$ to which its input is the radius $\theta$. Thus, our produced explanation function(s) approximates the \textit{true} function $f$ w.r.t radius $\theta$ conditioned on the point of interest or center $\mathbf{x}$. 

An approximate explanation function can be linear or polynomial w.r.t $\theta$ given a fixed center $\mathbf{x}$. When using high-order polynomial functions for our explanations, we assume that none of the true AQ functions monotonically increases with $\theta$, 
and that we know the order of the polynomial. 
A non-decreasing AQ function w.r.t $\theta$ is the \texttt{COUNT}, i.e., the number of data points in a specific (hyper)sphere of radius $\theta$. 
For aggregates such as \texttt{AVG}, the function monotonicity does not hold, thus, by representing explanation functions with high-order polynomial functions will be problematic. 
By adopting only a single linear function will still have trouble representing explanations accurately. For instance, if the result $y$ is given by \texttt{AVG} or \texttt{CORR}, then the output of the function might increase or decrease for a changing $\theta$. 
Even for \texttt{COUNT}, a single linear function might be an incorrect representation as the output $y$  might remain constant within an interval $\theta$ and then increase abruptly. 
Therefore, we choose to employ multiple local linear functions to capture the possible inherent non linearity of $f$ as we vary $\theta$ around center $\mathbf{x}$ in any direction. 

\begin{figure}
\begin{center}
\includegraphics[height=3.5cm,width=9.0cm]{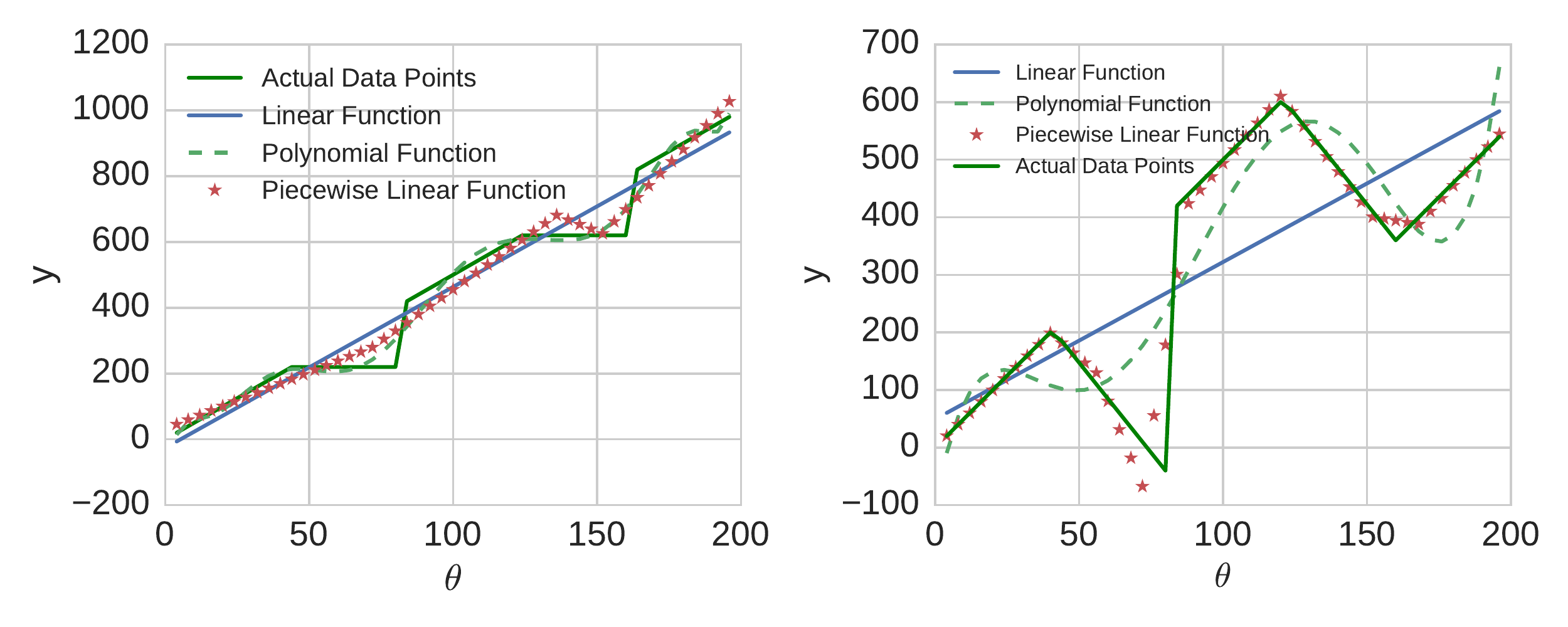}
\caption{(Left) Actual and approximate explanation functions for a monotonically increasing result (x-axis is radius $\theta$ of a query and y-axis is the result $y$ given $\mathbf{x}$); (right) Actual and approximate explanation functions for non-linear function vs. radius $\theta$ of a query given center $\mathbf{x}$.}
\label{fig:explanation_functions}
\end{center}
\end{figure}
Figure \ref{fig:explanation_functions} (left) shows that the \textit{true} function for an AQ (e.g., \texttt{COUNT}) is monotonically increasing w.r.t $\theta$ given $\mathbf{x}$. The resulting \textit{linear} and \textit{polynomial} approximation functions fail to be representative. The same holds in Figure \ref{fig:explanation_functions} (right) in which the \textit{true} function is non-linear for an AQ (e.g., \texttt{AVG}). Hence, we need to find an adjustable function that is able to capture such irregularities and conditional non-linearities. A more appropriate solution is to approximate the explanation function $f$ with a Piecewise Linear Regression (PLR) function  a.k.a. \textit{segmented regression}. 
A PLR approximation of $f$ addresses the above shortcomings by finding the \textit{best} multiple local linear regression functions, as shown in Figure \ref{fig:explanation_functions}. 

\begin{definition}(Explanation Function)
Given an AQ $\mathbf{q} = [\mathbf{x},\theta]$, an explanation function $f(\theta;\mathbf{x})$ is defined as the fusion of local piecewise linear regression functions $f \approx\sum{\hat{f}(\theta;\mathbf{x})}$ derived by the fitting of $\hat{f}$ over similar AQ queries $\mathbf{q}' = [\mathbf{x}',\theta']$ with radii $\theta'$ to the AQ $\mathbf{q}$.  
\label{definition:explanation}
\end{definition}
\vspace{-1pt}
\section{Explanation Approximation Fundamentals}
\label{sec:the_framework}
\subsection{Explanation Approximation}
XAXA utilizes AQs to train statistical learning models that are able to accurately approximate the \textit{true} explanation function $f$ for any center of interest $\mathbf{x}$. 
Given these models, we no longer need access to the underlying data, thus, guaranteeing efficiency. This follows since the models learn the various query patterns to approximate the explanation functions $f(\theta;\mathbf{x}), \forall \mathbf{x} \in \mathbb{R}^{d}$. 


Formally, given a well defined explanation loss/discrepancy $\mathcal{L}(f,\hat{f})$ (defined later) between the \textit{true explanation function $f(\theta;\mathbf{x})$} and an \textit{approximated} function $\hat{f}(\theta;\mathbf{x})$, we seek the optimal approximation function $\hat{f}^{*}$ that minimizes the Expected Explanation Loss (EEL) for \textit{all possible} queries:
\begin{eqnarray}
\hat{f}^{*} = \arg \min_{\hat{f} \in \mathcal{F}}\int_{\mathbf{x} \in \mathbb{R}^{d}}\int_{\theta \in \mathbb{R}_{+}}\mathcal{L}(f(\theta;\mathbf{x}),\hat{f}(\theta;\mathbf{x}))p(\theta,\mathbf{x})d\theta d\mathbf{x},
\label{eq:objective}
\end{eqnarray}
where $p(\theta, \mathbf{x})$ is the probability density function of the queries $\mathbf{q} = [\mathbf{x},\theta]$ over the query workload. Eq(\ref{eq:objective}) is the objective minimization loss function given that we use the optimal model approximation function to explain the relation between $y$ and radius $\theta$ for any given/fixed center $\mathbf{x}$. 

However, accuracy will be problematic as it seems intuitively wrong that one such function can explain \textit{all} the queries at any location $\mathbf{x} \in \mathbb{R}^{d}$ with any radius $\theta \in \mathbb{R}$. 
Such an explanation is not accurate because: (i) the radius $\theta$ can generally be different for different queries. For instance, crime data analysts will issue AQs with a different radius to compare if crimes centered within a small radius at a given location increase at a larger/smaller radius; (ii) At different locations $\mathbf{x}$, the result $y$ will almost surely be different. 
%

Therefore, we introduce \textit{local} approximation functions $\hat{f}_1,\ldots, \hat{f}_K$ that collectively minimize the objective in (\ref{eq:objective}). Thus, we no longer wish to find one global approximation to the true explanation function for all possible queries. Instead, we fit a number of local optimal functions that can explain a \textit{subset} of possible queries. We refer to those models as Local PLR Models (LPM).  

The LPMs are fitted using AQs forming a cluster of similar query parameters. For each uncovered cluster of AQs, we fit an LPM. 
Therefore, if $K$ clusters are obtained from clustering the query vectorial space $\mathbb{Q}$, then $K$ LPMs are fitted.
The obtained $K$ LPMs are essentially local multiple possible explanations. Intuitively, this approach makes more sense as queries that are issued at relatively close proximity to each other, location-wise, and that have similar radii will tend to have similar results/answers and in turn will behave more or less the same way. The resulting fused explanation has more accuracy as the clusters have less variance for $y$ and $\theta$. This was empirically shown from our experimental workload.


Formally, to minimize EEL, we seek $K$ local approximation functions $\hat{f}_{k} \in \mathcal{F}, k=1 \in [K]$ from a family of linear regression functions $\mathcal{F}$, such that for each query $\mathbf{q}$ belonging to the partition/cluster $k$ of the query space, notated as $\mathbb{Q}_{k}$, the summation of the local EELs is minimized:
\begin{eqnarray}
\mathcal{J}_{0}(\{\hat{f}_{k}\}) = \sum_{\hat{f}_{k} \in \mathcal{F}}\int_{\mathbf{q} \in \mathbb{Q}_{k} \subset \mathbb{R}^{d+1}}\mathcal{L}(f(\theta;\mathbf{x}),\hat{f}_{k}(\theta;\mathbf{x}))p_{k}(\mathbf{q})d\mathbf{q}
\label{eq:lobjective}
\end{eqnarray}
where $p_{k}(\mathbf{q})$ is the probability density function of the query vectors belonging to a query subspace $\mathbb{Q}_{k}$. Thus, $\mathcal{J}_{0}$ forms our \textit{generic} optimization problem mentioned in Section \ref{sec:prob_def}.
We define \textit{explanation loss} $\mathcal{L}(f,\hat{f})$ as the discrepancy of the actual explanation function $f$ due to the 
approximation of explanation $\hat{f}$. 
For evaluating the loss $\mathcal{L}$, we propose two different aspects: \textit{(1) the statistical aspect}, where the goodness of fit of our explanation function is measured and (2) the \textit{predictive accuracy} denoting how well the results from the \textit{true} explanation function can be approximated using our explanation function; see Section \ref{sec:evaluation}.

\subsection{Solution Overview}
The proposed methodology for computing explanations is split into three \textit{modes} (Figure \ref{fig:modes}). 
\begin{figure}
\begin{center}
\includegraphics[height=2cm,width=9.0cm]{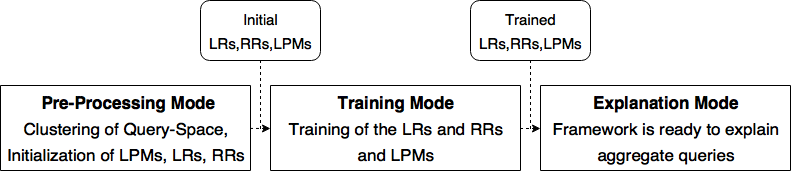}
\caption{The XAXA framework modes overview. }
\label{fig:modes}
\end{center}
\end{figure}
The \textit{Pre-Processing Mode} aims to identify the optimal number of LPMs and an initial approximation of their parameters using previously executed queries. The purpose of this mode is basically to jump-start our framework. In the \textit{Training Mode}, the LPMs' parameters are incrementally optimized to minimize the objective function (\ref{eq:lobjective}) as incoming queries are processed in an on-line manner. In the \textit{Explanation Mode}, the framework is ready to explain AQ results via the provided local explanation functions.

\subsubsection{Pre-Processing Mode}
A training set $\mathcal{T}$ of $m$ previously executed queries $\mathbf{q}$ and their corresponding aggregate results is used as input to the \textit{Pre-Processing Mode}.
The central task is to partition (quantize) the query space $\mathbb{Q}$ based on the observed previous queries $\mathbf{q} \in \mathcal{T}$ into $K$ clusters, sub-spaces $\mathbb{Q}_{k}$, in which queries with similar $\mathbf{x}$ are grouped together. Each cluster is then further quantized into $L$ sub-clusters, as queries with similar $\mathbf{x}$ are separated with regards to their $\theta$ parameter values. Therefore, this is a hierarchical query space quantization (first level partition w.r.t. $\mathbf{x}$ and second level partition w.r.t. $\theta$), where each Level-1 (L1) cluster $\mathbb{Q}_{k}, k = 1,\ldots,K$ is associated with a number of Level-2 (L2) sub-clusters $\mathbb{U}_{kl}, l = 1, \ldots, L$ in the $\theta$ space. 
For each L1 cluster $\mathbb{Q}_{k}$ and L2 sub-cluster $\mathbb{U}_{kl}$, 
we assign a L1 representative, hereinafter referred to as Location Representative (LR) and a L2 representative, hereinafter referred to as Radius Representative (RR). 
The LR converges to the mean vector of the centers of all queries belonging to L1 cluster $\mathbb{Q}_{k}$, 
while the associated RR converges to the mean radius value of the radii of all queries, whose radius values belong to $\mathbb{U}_{kl}$.
After the hierarchical quantization of the query space, the task is to \textit{associate an LPM} $\hat{f}_{kl}(\theta;\mathbf{x})$ with \textit{each} L2 sub-cluster $\mathbb{U}_{kl}$, given that the center parameter $\mathbf{x}$ is a member of the corresponding L1 cluster $\mathbb{Q}_{k}$. This process is nicely summarized in Figure \ref{fig:modelSummary}.

\subsubsection{Training Mode}
This mode optimally adapts/tunes the clustering parameters (LR and RR representatives) of the pre-processing mode in order to minimize the objective function (\ref{eq:lobjective}). This optimization process is achieved incrementally by processing each new pair $(\mathbf{q}_{i},y_{i})$ in an on-line manner. 
Consulting Figure \ref{fig:modelSummary}, in \textit{Training} mode, each incoming query $\mathbf{q}_{i}$ is projected/mapped to the closest LR corresponding to a L1 cluster. 
Since, the closest LR is associated with a number of L2 RRs, the query is then assigned to one of those RRs (closest $u$ to $\theta$), and then the associated representatives are adapted/fine-tuned. After a pre-specified number of processed queries, the corresponding LPM $\hat{f}_{kl}(\theta;\mathbf{x})$ is re-adjusted to associate the result $y$ with the $\theta$ values in $\mathbb{U}_{kl}$ and account for the newly associated queries.

\subsubsection{Explanation Mode}
In this mode no more modifications to LRs, RRs, and the associated LPMs are made. Based on the L1/2 representatives and their associated approximation explanation functions LPMs, the model is now ready to provide explanations for AQs. 
Figure \ref{fig:modelSummary} sums up the result of all three modes and how an explanation is given to the user. For a given query $\mathbf{q}$, the XAXA finds the closest LR;$\mathbf{w}_k$ and then, based on a combination of the RRs;$(u_{k,1},\ldots, u_{k,3})$ and their associated LPMs;$(\hat{f}_{k,1}\ldots,\hat{f}_{k,3})$, returns an explanation as a fusion of diverse PLR functions derived by the L2 level. We elaborate on this fusion of L2 LPMs in Section \ref{ref:prediction_mode}\footnote{Note that LPMs are also referred to as PLRs.}.  
\section{Optimization Problems Deconstruction}

\subsection{Optimization Problem 1: Query Space Clustering}
The first part of the deconstructed generic problem identifies the need to find \textit{optimal} LRs and RRs, as such optimal parameters guarantee better grouping of queries thus better approximation of \textit{true} function, during the \textit{Pre-Processing} phase. The LRs are initially random location vectors $\mathbf{w}_{k} \in \mathbb{R}^{d}, k = 1, \ldots, K$, and are then refined by a clustering algorithm as the mean vectors of the query locations that are assigned to each LR.  Formally, this phase finds the optimal mean vectors $\mathcal{W} = \{\mathbf{w}_{k}\}_{k=1}^{K}$, which minimize the L1 Expected Quantization Error (L1-EQE):
\begin{eqnarray}
\mathcal{J}_{1}(\{\mathbf{w}_{k}\}) = \mathbb{E}[\lVert \mathbf{x}-\mathbf{w^*}\rVert^{2}; \mathbf{w}^{*} = \arg \min_{k=1,\ldots,K} \lVert \mathbf{x}-\mathbf{w}_{k}\rVert^{2}],
\label{eq:kmeans-association}
\end{eqnarray}
where $\mathbf{x}$ is the location of query $\mathbf{q} = [\mathbf{x},\theta] \in \mathcal{T}$ and $\mathbf{w}_k$ is the mean center vector of all queries $\mathbf{q} \in \mathbb{Q}_{k}$ associated with $\mathbf{w}_k$. 
We adopt the $K$-Means \cite{hartigan1979algorithm} clustering algorithm to identify the L1 LRs based on the queries' centers $\mathbf{x}$[\footnote{The framework can employ any clustering algorithm.}]. This phase yields the $K$ L1 cluster representatives, LRs.
A limitation of the $K$-Means algorithm is that we need to specify $K$ number of LRs in advance. 
Therefore, we devised a simple strategy to find a near-optimal number $K$. By running the clustering algorithm in a loop, each time increasing the input parameter $K$ for the $K$-Means algorithm, we are able to find a $K$ that is near-optimal. In this case, an optimal $K$ would \textit{sufficiently} minimize the Sum of Squared Quantization Errors (SSQE), which is equal to the summation of distances, of all queries from their respective LRs. The algorithm is shown in Appendix \ref{algo:near-optimal}.
\begin{equation}
\text{SSQE} = \sum_i^n \min_{\mathbf{w}_k\in \mathcal{W}}(||\mathbf{x}_i - \mathbf{w}_k ||_2^2)
\label{eq:ssqe}
\end{equation}
This algorithm starts with an initial $K$ value and gradually increments it until SSQE yields an improvement no more than a predefined threshold $\epsilon > 0$. 

We utilize $K$-Means over each L1 cluster of queries created by the L1 query quantization phase. Formally, we minimize the conditional L2 Expected Quantization Error (L2-EQE):
\begin{eqnarray}
\mathcal{J}_{1.1}({u_{k,l}}) = \mathbb{E}[(\theta-u_{k,*})^2]; u_{k,*} = \arg \min_{l\in L}{(\theta - u_{k,l})^2}
\label{eq:rr-kmeans-association}
\end{eqnarray}
conditioned on $\mathbf{w}_k = \mathbf{w}^*$. Therefore, for each LR $\mathbf{w}_{1},\ldots ,\mathbf{w}_{K}$, we \textit{locally} run the $K$-Means algorithm with $L$ number of RRs, where a near optimal value for $L$ is obtained following the same near-optimal strategy. With SSQE being computed using $u$ and $\theta$ instead of ($\mathbf{w}, \mathbf{x})$. Specifically, 
we identify the L2 RRs over the radii of those queries from $\mathcal{T}$ whose closest LR is $\mathbf{w}_{i}$. 
Then, by executing the $L$-Means over the radius values 
from those queries we derive the corresponding set of radius representatives $\mathcal{U}_{i} = \{u_{i1},\dots ,u_{iL}\}$, where each $u_{il}$ is the mean radius of all radii in the $l$-th L2 sub-cluster of the $i$-th L1 cluster. Thus the first part of the deconstructed optimization problem can be considered as two-fold, as we wish to find optimal parameters for both LRs and RRs that minimize (\ref{eq:kmeans-association}) and (\ref{eq:rr-kmeans-association}).

\subsection{Optimization Problem 2: Fitting PLRs per Query Cluster}
The second part of the deconstructed generic optimization problem has to do with fitting \textit{optimal} approximation functions such that the local EEL is minimized given the optimal parameters obtained from the first part of the deconstructed problem.
We fit (approximate) $L$ PLR functions $\hat{f}_{kl}(\theta;\mathbf{w}_{k})$ for each L2 sub-cluster $\mathbb{U}_{kl}$ for those $\theta$ values that belong to the L2 RR $u_{kl}$. The fitted PLR captures the \textit{local} statistical dependency 
of $\theta$ around its mean radius $u_{kl}$ given that $\mathbf{x}$ is a  member of the L1 cluster represented by $\mathbf{w}_{k}$. Given the objective in (\ref{eq:lobjective}), for each local L2 sub-cluster, the approximate function $\hat{f}_{kl}$ minimizes the conditional Local EEL: 
\begin{eqnarray}\label{eq:llobjective}
\mathcal{J}_{2}(\{\beta_{kl},\lambda_{kl}\}) & = & \mathbb{E}_{\theta,\mathbf{x}}[\mathcal{L}(f_{kl}(\theta;\mathbf{w}_{k}),\hat{f}_{kl}(\theta;\mathbf{w}_{k}))] \\ \nonumber
\mbox{s.t.} & & \mathbf{w}_{k} = \arg \min_{j \in [K]}\lVert \mathbf{x} - \mathbf{w}_{j}\rVert^{2}, \\  \nonumber
& & u_{kl} = \arg \min_{j \in [L]} | \theta - u_{kl} |
\end{eqnarray}
conditioned on the closeness of the query's $\mathbf{x}$ and $\theta$ to the L1 and L2 quantized query space $\mathbb{Q}_{k}$ and $\mathbb{U}_{kl}$, respectively, where $\{\beta_{kl},\lambda_{kl}\}$ are the parameters for the PLR local approximation explanation functions $\hat{f}_{kl}$ defined as follows in (\ref{eq:mars}). 

\textbf{Remark 1:} Minimizing objective $\mathcal{J}_{2}$ in (\ref{eq:llobjective}) is not trivial due to the double conditional expectation over each query center and radius. To \textit{initially} minimize this local objective, we adopt the Multivariate Adaptive Regression Splines (MARS) \cite{friedman1991multivariate} as the approximate model explanation function $\hat{f}_{kl}$. 
Thus, our approximate $\hat{f}_{kl}$ has the following form: 
\begin{eqnarray}
\hat{f}_{kl}(\theta;\mathbf{w}_{k}) & = & \beta_{0} + \sum_{i=1}^{M}\beta_{i}h_{i}(\theta),
\label{eq:mars}
\end{eqnarray}
where the basis function $h_i(\theta) = \max\{0, \theta-\lambda_{i}\}$, 
with $\lambda_{i}$ being the \textit{hinge} points. 
Essentially, this creates $M$ linear regression functions.
The number $M$ of linear regression functions is automatically derived by MARS using a threshold for convergence w.r.t $R^2$(\textit{coefficient-of-determination}, later defined); which optimize fitting. Thus, guaranteeing an optimal number of $M$ linear regression functions.  For each L2 sub-cluster, we fit $L$ MARS functions $\hat{f}_{kl}, l = 1, \ldots, L$, each one associated with an LR and an RR, thus, in this phase we initially fit $K \times L$ MARS functions for providing explanations for the whole query space.    
Figure \ref{fig:modelSummary} illustrates the two levels L1 and L2 of our explanation methodology, where each LR and RR are associated with a MARS model. 

\subsection{Optimization Problem 3: Putting it All Together}
The optimization objective functions in (\ref{eq:kmeans-association}), (\ref{eq:rr-kmeans-association}) and (\ref{eq:llobjective}) should be combined to establish our generic optimization function $\mathcal{J}_{0}$, which involves the estimation of the \textit{optimal} parameters that minimize the EQE in $\mathcal{J}_{1}$ (L1) and $\mathcal{J}_{1.1}$ (L2), and then the conditional optimization of the parameters in $\mathcal{J}_{2}$. In this context, we need to estimate these values of the parameters in $\mathcal{W} = \{\mathbf{w}_{k}\}$ and 
$\mathcal{U} = \{u_{kl}\}$ that minimize the EEL given that our explanation comprises a set of regression functions $\hat{f}_{kl}$. Our joint optimization is optimizing \textit{all} the parameters from $\mathcal{J}_{1}$, $\mathcal{J}_{1.1}$, and $\mathcal{J}_{2}$ from Problems 1 and 2:

\begin{eqnarray}
\mathcal{J}_{3}(\mathcal{W}, \mathcal{U}, \mathcal{M}) = \mathcal{J}_{1}(\mathcal{W}) + \mathcal{J}_{1.1}(\mathcal{U}) + \mathcal{J}_{2}(\mathcal{M})
\label{eq:J3}
\end{eqnarray}
with parameters: 

\begin{eqnarray}
\mathcal{W} = \{\mathbf{w}_{k}\}, \mathcal{U} = \{u_{kl}\}, \mathcal{M} =  \{(\beta_{i},\lambda_{i})_{kl}\}
\label{eq:parameters}
\end{eqnarray}
with $k \in [K]$, $l \in [L]$, $i \in [M]$, which are stored for fine tuning and explanation/prediction.

\textbf{Remark 2:} The optimization function $\mathcal{J}_{3}$ approximates the generic objective function $\mathcal{J}_{0}$ in (\ref{eq:lobjective}) via L1 and L2 query quantization (referring to the integral part of (\ref{eq:lobjective})) and via the estimation of the local PLR functions referring to the family of function space $\mathcal{F}$. Hence, we hereinafter contribute to an algorithmic/computing solution to the optimization function $\mathcal{J}_{3}$ approximating the \textit{theoretical} objective function $\mathcal{J}_{0}$. 

\begin{figure}
\begin{center}
\includegraphics[height=4.5cm,width=8.5cm]{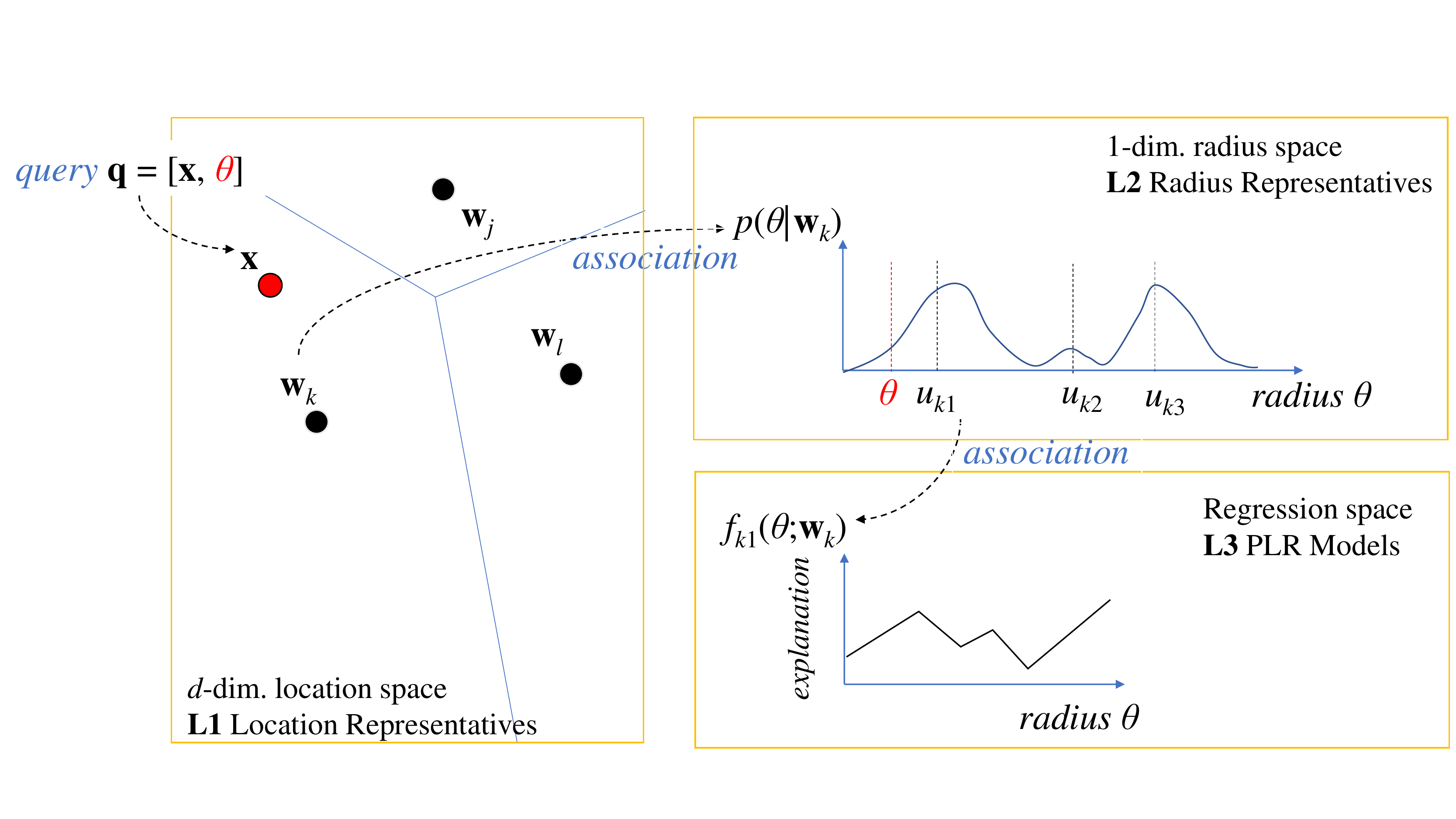}
\caption{The XAXA framework rationale: Query mapping to L1 cluster, then mapping to L2 sub-cluster, and association to L3 PLR regression space. The explanation is provided by the associated set of PLR functions $f_{kl}$.}
\label{fig:modelSummary}
\end{center}
\end{figure}

\section{XAXA: Statistical Learning Methodology}
We propose a new statistical learning model that associates the (hierarchically) clustered query space with (PLR-based explanation) functions.Given the hierarchical query space quantization and local PLR fitting, the training mode fine-tunes the parameters in (\ref{eq:parameters}) to optimize both $\mathcal{J}_{1}$, $\mathcal{J}_{1.1}$ in (\ref{eq:kmeans-association}), (\ref{eq:rr-kmeans-association}) and $\mathcal{J}_{2}$ in (\ref{eq:llobjective}). 
The three main sets of parameters $\mathcal{W}$, $\mathcal{U}$, and $\mathcal{M}$ of the framework are \textit{incrementally} trained in \textit{parallel} using queries issued against the DBMS.
Therefore, (i) the analyst issues an AQ $\mathbf{q} = [\mathbf{x}, \theta]$; (ii) the DBMS answers with result $y$ ; (iii) our framework exploits pairs $(\mathbf{q},y)$ to train its new statistical learning model. 


Training follows three steps. First, in the \textit{assignment} step, the incoming query $\mathbf{q}$ is associated with an L1, L2 and PLR, by finding the closest representatives at each level. In the \textit{on-line adjustment} step, the 
LR, RR and PLR representative parameters are gradually modified 
w.r.t. the associated incoming query in the direction of minimizing the said objectives. Finally, the \textit{off-line adjustment} step conditionally fine-tunes any PLR fitting model associated with any incoming query so far. This step is triggered when a predefined retraining value is reached or the number of queries exceeds a threshold. 

\textbf{Query Assignment Step}. For each executed query-answer pair $(\mathbf{q},y)$, we project $\mathbf{q}$ to its closest L1 LR using only the query center $\mathbf{x}$ based on (\ref{eq:kmeans-association}). Obtaining the projection $\mathbf{w}_{k}^{*}$  
allows us to \textit{directly} retrieve the associated RRs $\mathcal{U}_{k}^{*} = \{u_{1k}^{*}, \ldots, u_{Lk}^{*}\}$. 
Finding the best L2 RR in $\mathcal{U}_{k}^{*}$ is, however, more complex than locating the best LR. 
In choosing one of the L2 RRs, we consider both \textit{distance} and the associated \textit{prediction error}. 
Specifically, the \textit{prediction error} is obtained by the PLR fitting models of each RR from the set $\mathcal{U}_{k}^{*}$. 
Hence, in this context, we first need to consider the distance of our query $\mathbf{q} = [\mathbf{x},\theta]$ to all of the RRs in $\mathcal{U}_{k}^{*}$ focused on the absolute  distance in the radius space: 
\begin{equation}
|\theta-u_{kl}|, \forall u_{kl} \in \mathcal{U}_{k}^{*},
\label{eq:dist-error}
\end{equation}
and, also, the \textit{prediction error} given by each RR's associated PLR fitting model $\hat{f}_{kl}$. The prediction error is obtained by the squared difference of the actual result $y$ and the predicted outcome of the local PLR fitting model $\hat{y} = \hat{f}_{kl}(\theta;\mathbf{w}_{k}^{*})$:
\begin{equation}
(y - \hat{f}_{kl}(\theta;\mathbf{w}_{k}^{*}))^2, l =1, \ldots, L 
\label{eq:pred-error}
\end{equation}

Therefore, for assigning a query $\mathbf{q}$ to a L2 RR, we combine both distances in (\ref{eq:dist-error}) and (\ref{eq:pred-error}) to get the assignment distance in (\ref{eq:distance-error}), which returns the RR in $\mathcal{U}_{k}^{*}$ which minimizes:

\begin{equation}
  l^* = \arg \min_{l \in [L]}(z |\theta-u_{kl}| + (1-z)(y - \hat{f}_{kl}(\theta;\mathbf{w}_{k}^{*}))^2)
 \label{eq:distance-error}
\end{equation}
The parameter $z \in (0,1)$ tilts our decision towards the \textit{distance}-wise metric or the \textit{prediction}-wise metric, depending on which aspect we wish to attach greater significance.


\textbf{On-line Representatives Adjustment Step.} This step optimally adjusts the positions of the chosen LR and RR so that training is informed by the new query. Their positions are shifted using Stochastic Gradient Descent (SGD) \cite{bottou2012stochastic} over the $\mathcal{J}_{1}$ and $\mathcal{J}_{2}$ w.r.t. $\mathbf{w}$ and $\theta$ variables in the negative direction of their gradients, respectively. This ensures the \textit{optimization of both objective functions}. Theorems \ref{theorem:1} and \ref{theorem:2} present the update rule for the RR selected in (\ref{eq:distance-error}) to minimize the EEL given that a query is projected to its L1 LR and its convergence to the median value of the radii of those queries.

\begin{theorem}
Given a query $\mathbf{q} = [\mathbf{x},\theta]$ projected onto the closest L1 $\mathbf{w}_{k^{*}}$ and L2 $u_{k^{*},l^{*}}$, the update rule for $u_{k^{*},l^{*}}$ that minimizes $\mathcal{J}_{2}$ is:  
\begin{equation}
\Delta u_{k^{*},l^{*}} \leftarrow \alpha z \mbox{sgn}(\theta - u_{k^{*},l^{*}})
\label{eq:sgd}
\end{equation}
\label{theorem:1}
\end{theorem}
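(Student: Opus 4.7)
The plan is to derive the update in (\ref{eq:sgd}) as a stochastic gradient descent step on the per-query instantaneous counterpart of the objective, combined with the L2 quantization distance already used for assignment in (\ref{eq:distance-error}). Concretely, I would treat the single-sample loss
$$\ell(u_{k^{*},l^{*}}) \;=\; z\,|\theta - u_{k^{*},l^{*}}| \;+\; (1-z)\bigl(y - \hat{f}_{k^{*},l^{*}}(\theta;\mathbf{w}_{k^{*}})\bigr)^{2}$$
as an unbiased sample of the population objective being minimized: the quadratic term is the instantaneous $\mathcal{L}$ inside $\mathcal{J}_{2}$, while the absolute-value term encodes the L2 quantization constraint also appearing in $\mathcal{J}_{1.1}$. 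The key observation is that the PLR model $\hat{f}_{k^{*},l^{*}}$ is parameterized only by $\{\beta,\lambda\}$ and takes inputs $(\theta,\mathbf{w}_{k^{*}})$; it has no functional dependence on $u_{k^{*},l^{*}}$. Therefore, in the gradient with respect to $u_{k^{*},l^{*}}$ the prediction-error term drops out entirely, and only the absolute-value contribution remains.

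From here I would compute the subgradient of the absolute value, $\tfrac{\partial}{\partial u}\,|\theta-u| = -\operatorname{sgn}(\theta-u)$ (defined arbitrarily in $[-1,1]$ at the measure-zero point $\theta=u$), obtaining
$$\nabla_{u_{k^{*},l^{*}}}\,\ell \;=\; -\,z\,\operatorname{sgn}(\theta - u_{k^{*},l^{*}}).$$
Applying the standard SGD rule $u \leftarrow u - \alpha\,\nabla_{u}\ell$ with step size $\alpha > 0$ then gives exactly $\Delta u_{k^{*},l^{*}} = \alpha z\,\operatorname{sgn}(\theta - u_{k^{*},l^{*}})$, which is (\ref{eq:sgd}). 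To justify that this update minimizes $\mathcal{J}_{2}$ in the long run, I would invoke the Robbins--Monro argument: under the standard step-size conditions $\sum_{t} \alpha_{t} = \infty$ and $\sum_{t} \alpha_{t}^{2} < \infty$, the iterates converge to a stationary point of the conditional expectation of $\ell$ given assignment to sub-cluster $(k^{*},l^{*})$. Because the $u$-dependent contribution is L1, the stationary point is the conditional median of the radii $\theta$ routed to this sub-cluster, matching the remark in the text just preceding the theorem.

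The main obstacle, and the point that most needs care, is reconciling the theorem's phrasing (\emph{minimizes} $\mathcal{J}_{2}$) with the fact that $\mathcal{J}_{2}$ in (\ref{eq:llobjective}) is explicitly parameterized by $\{\beta_{kl},\lambda_{kl}\}$ rather than by $u_{kl}$; the representative $u_{kl}$ enters $\mathcal{J}_{2}$ only through the assignment constraint $u_{kl} = \arg\min_{j}|\theta - u_{kj}|$. I would handle this by explicitly working with the surrogate $\ell$ above, which is the correct quantity being descended on in the on-line step, and arguing that shifting $u_{k^{*},l^{*}}$ toward $\theta$ tightens future assignments of nearby queries to this sub-cluster and hence indirectly reduces $\mathcal{J}_{2}$. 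A secondary subtlety is the non-differentiability of $|\cdot|$ at $\theta = u_{k^{*},l^{*}}$, which is benign because the event has probability zero under any continuous query distribution and a valid subgradient is available otherwise.
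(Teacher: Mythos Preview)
Your proposal is correct and follows essentially the same route as the paper: the paper's proof also defines the per-query loss $\mathcal{E}(u)=\mu|\theta-u|+(1-\mu)(y-\hat f(\theta;\mathbf{w}))^{2}$ (with $\mu$ playing the role of $z$), applies SGD via Robbins--Monro with the usual step-size conditions, and reads off $\Delta u=\alpha\mu\,\mbox{sgn}(\theta-u)$ from the partial derivative. Your write-up is in fact more explicit than the paper's, since you spell out why the prediction-error term vanishes (no dependence of $\hat f$ on $u$) and handle the subgradient at $\theta=u$, whereas the paper simply asserts the derivative result.
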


\begin{proof}
Proof is in the Appendix \ref{appendix:theorem:1}
\end{proof}

$\alpha \in (0,1)$ is the learning rate defining the shift of $\theta$ ensuring convergence to optimal position and $sgn(x) = \frac{d|x|}{dx}, x \neq 0$ is the signum function. Given that query $\mathbf{q}$ is projected on L1 $\mathbf{w}_{k}^{*}$ and on L2 $u_{k^{*},l^{*}}$, the corresponding RR converges to the local median of all radius values of those queries. 

\begin{theorem}
Given the optimal update rule in (\ref{eq:sgd}) for L2 RR $u_{k,l}$, it converges to the median of the $\theta$ values of those queries projected onto the L1 query subspace $\mathbb{Q}_{k}$ and the L2 sub-cluster $\mathbb{U}_{kl}$, i.e., for each query $\mathbf{q} = [\mathbf{x},\theta]$ with $\mathbf{x} \in \mathbb{Q}_{k}$, it holds true for $u_{kl} \in \mathbb{U}_{kl}$ that: $\int_{0}^{u_{kl}}p(\theta|\mathbf{w}_{k})d\theta = \frac{1}{2}$.
\label{theorem:2}
\end{theorem}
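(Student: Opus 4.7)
The plan is to characterize the equilibrium of the stochastic update rule from Theorem \ref{theorem:1} and show that this equilibrium coincides with the conditional median. The argument has two ingredients: (i) taking expectations of the update (\ref{eq:sgd}) under the conditional query density $p(\theta \mid \mathbf{w}_k)$ restricted to the L2 sub-cluster $\mathbb{U}_{kl}$, and (ii) invoking the Robbins--Monro stochastic approximation framework \cite{bottou2012stochastic} to conclude that the iterates $u_{k,l}^{(t)}$ converge to a zero of the expected update, provided the learning-rate sequence $\{\alpha_t\}$ satisfies the standard conditions $\sum_t \alpha_t = \infty$ and $\sum_t \alpha_t^2 < \infty$.

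First I would fix the L1/L2 assignment, so that only queries with $\mathbf{x}$ projected onto $\mathbf{w}_k$ and $\theta$ projected onto $u_{kl}$ are considered; this lets me drop the quantization indicator and work with the conditional density $p(\theta \mid \mathbf{w}_k)$ supported on $\mathbb{U}_{kl}$. Taking the conditional expectation of (\ref{eq:sgd}) gives
\begin{equation}
\mathbb{E}[\Delta u_{kl} \mid \mathbf{w}_k] \;=\; \alpha z\, \mathbb{E}[\operatorname{sgn}(\theta - u_{kl}) \mid \mathbf{w}_k] \;=\; \alpha z\bigl( \mathbb{P}(\theta > u_{kl} \mid \mathbf{w}_k) - \mathbb{P}(\theta < u_{kl} \mid \mathbf{w}_k)\bigr).
\end{equation}
At a stable fixed point the expected update must vanish, forcing $\mathbb{P}(\theta > u_{kl} \mid \mathbf{w}_k) = \mathbb{P}(\theta < u_{kl} \mid \mathbf{w}_k)$; together with $\mathbb{P}(\theta > u_{kl} \mid \mathbf{w}_k) + \mathbb{P}(\theta < u_{kl} \mid \mathbf{w}_k) = 1$ (assuming $p(\theta \mid \mathbf{w}_k)$ has no atom at $u_{kl}$), this yields exactly $\int_{0}^{u_{kl}} p(\theta \mid \mathbf{w}_k)\, d\theta = \tfrac{1}{2}$, the definition of the conditional median.

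To close the argument I would note that the update (\ref{eq:sgd}) is the SGD iteration on the convex, piecewise-linear objective $g(u) = \mathbb{E}[\,|\theta - u|\,\mid \mathbf{w}_k]$ whose subgradient is exactly $-\operatorname{sgn}(\theta - u)$ (up to the sign convention inherited from minimizing $\mathcal{J}_2$); since $g$ is convex and coercive, its minimizer is the conditional median, and Robbins--Monro type theorems guarantee almost-sure convergence of $u_{kl}^{(t)}$ to this minimizer under the standard step-size conditions. The main obstacle is the rigorous convergence statement rather than the fixed-point computation: the piecewise-linear (non-smooth) gradient requires care, but this is a classical setting in the stochastic approximation literature and can be handled by appealing to existing results for quantile estimation via SGD. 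The fixed-point characterization above is what the theorem actually asserts, so I would present the expectation calculation as the body of the proof and defer the convergence regularity conditions to a remark.
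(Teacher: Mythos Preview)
Your proposal is correct and follows essentially the same route as the paper: take the expectation of the update rule (\ref{eq:sgd}), set it to zero at equilibrium, and read off $\mathbb{P}(\theta < u_{kl}\mid \mathbf{w}_k)=\tfrac12$. Your treatment is in fact more careful than the paper's, which simply assumes equilibrium and computes $\mathbb{E}[\operatorname{sgn}(\theta-u)] = 2\mathbb{P}(\theta\geq u)-1$ without the explicit Robbins--Monro convergence justification or the no-atom caveat you include.
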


\begin{proof}
Proof is in the Appendix \ref{appendix:theorem:2}
\end{proof}
Using SGD, $\theta_{k^{*},l^{*}}$ converges to the median of all radius values of all queries in the local L2 sub-cluster in an on-line manner. 

\textbf{Off-line PLR Adjustment Step}. The mini-batch adjustment step is used to conditionally re-train the PLRs to reflect the changes by (\ref{eq:sgd}) in $\mathcal{U}_{k}$ parameters. As witnessed earlier, representatives are incrementally adjusted based on the projection of the incoming query-answer pair onto L1 and L2 levels. For the PLR functions, the adjustment of hinge points and parameters $(\beta_{i},\lambda_{i})$ needs to happen in mini-batch mode taking into consideration the projected incoming queries onto the L2 level.
To achieve this, we keep track of the number of projected queries on each L2 sub-cluster $\mathcal{U}_{k}$ and re-train the corresponding parameters $(\beta_{kli},\lambda_{kli})$ PLR of the fitting $\hat{f}_{kl}$ given a conditionally optimal L2 RR $u_{kl}$ and for every processed query we increment a counter. Once we reach a predefined number of projected queries-answers, we re-train every PLR model that was 
affected by projected training pairs.
\vspace{-20pt}
\section{XAXA: Explanation Serving}
\label{ref:prediction_mode}
After \textit{Pre-Processing} and \textit{Training}, explanations can be provided for \textit{unseen} AQs.
The explanations are \textit{predicted} by the fusion of the trained/fitted PLRs based on the incoming queries. The process is as follows. The analyst issues a query $\mathbf{q}$, $q \notin \mathcal{T}$. 
To obtain the explanation of the result for the given query $\mathbf{q} = [\mathbf{x},\theta]$, XAXA finds the closest $\mathbf{w}_{k}^{*}$ LR 
and then directly locates all of the associated RRs of the $\mathcal{U}_{k}^{*}$. The provided explanation utilizes all \textit{selected} associated $L$ PLRs fitting models $\hat{f}_{k^{*},l}$. The selection of the most relevant PLR functions for explaining $\mathbf{q}$ is based on the boolean indicator $I(\theta, u_{k^{*},l})$, where $I(\theta, u_{k^{*},l}) = 1$ if we select to explain AQ based on the area around $\theta$ represented by the L2 RR $u_{k^{*},l}$; 0 otherwise. Specifically this indicator is used to denote which is the domain value for the $\theta$ radius in order to deliver the dependency of the answer $y$ within this domain as reflected by the corresponding PLR $\hat{f}_{k^{*},l}(\theta; \mathbf{w}_{k^{*}})$. In other words, for different query radius values, XAXA  selects different PLR models to represent the AQ, which is associated with the RR that is closer to the given query's $\theta$. Using this method, $L$ possible explanations for $\mathbf{q}$ exist and the selection of the most relevant PLR fitting model for the current radius domain is determined when $I(\theta, u_{k^{*},l}) = 1$ and, simultaneously, when:
\begin{equation}
\label{eq:boolean_function}
\begin{cases}
0\leq \theta<u_{k^{*},1}+\frac{1}{2}|u_{k^{*},1}-u_{k^{*},2}| \text{ and } l = 1\\
u_{k^{*},l-1} + \frac{1}{2}|u_{k^{*},l-1}+u_{k^{*},l}| \leq \theta < u_{k^{*},l} + \frac{1}{2}|u_{k^{*},l}+ u_{k^{*},l+1}|\\ \qquad \qquad \qquad \qquad \qquad \text{ and } l < L\\
u_{k^{*},l-1}+\frac{1}{2}|u_{k^{*},l-1}+u_{k^{*},l}|\leq \theta \text{ and } l = L
\end{cases}
\end{equation}

{\bf Interpretation}. The domain radius in (\ref{eq:boolean_function}) shows the radius interval boundaries for selecting the most relevant PLR fitting model for explanation in the corresponding radius domain. 
In other words, we switch between PLR models according to the indicator function. For instance, from radius $0$ up to the point where the first RR would still be the closest representative, indicated as $\leq u_1+\frac{1}{2}|u_1-u_2|$, we use the first PLR fitting model. Using this selection method, we derive an accurate explanation for the AQ represented by:
\begin{equation}
\label{eq:summation-of-f}
\sum_{l=1}^LI(\theta, u_{k^{*},l})\hat{f}_{k^{*},l}
\end{equation}
which is a PLR function where the indicator $I(\cdot, \cdot)$ returns 1 only for the selected PLR.
\textbf{Remark 2.} It might seem counterintuitive to use a combination of functions instead of just one PLR model to explain AQs. However, accuracy is increased when such alternative explanations are given as we are better able to explain the query answer $y$ and/or $z$ for a changing $\theta$ with the fitting model that is associated with the closest L2 RR. This is illustrated in Figure \ref{fig:explain-better} where a combination (fusion) of PLR models is able to give a finer grained explanation than a single model. In addition, we can visualize the intervals where each one of those models explain the dependency of the query answer w.r.t. a change in the radius.

\begin{figure}[!htbp]
\begin{center}
\includegraphics[height=3.5cm,width=7.0cm]{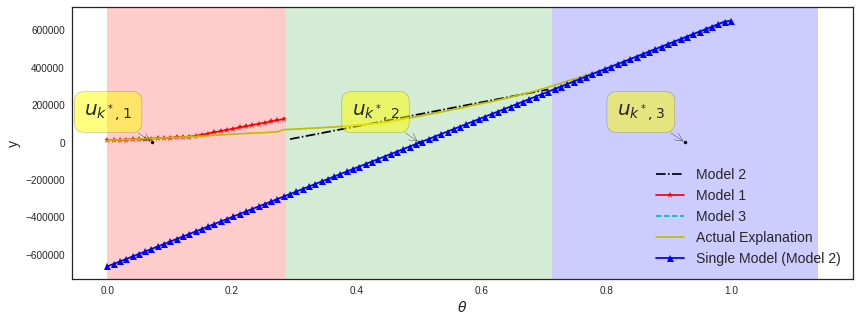}
\caption{Better explanations with $>1$ relevant models are obtained as evidenced by the intervals (radius domains), where each model is selected along with their associated L2 RRs.}
\label{fig:explain-better}
\end{center}
\end{figure}
\section{Experimental Evaluation}
\label{sec:evaluation}
\subsection{Experimental Setup \& Metrics}
\textbf{Data Sets and Query Workloads:} The real dataset  $\mathcal{B}_1 = \{\mathbf{x}_{i}\}_{i=1}^{N}, \mathbf{x} \in \mathbb{R}^2$ with cardinality $|\mathcal{B}_1| = N = 6\cdot{10^6}$ contains crimes for the city of Chicago \cite{crimesdata}. We also used another real dataset $\mathcal{B}_2= \{\mathbf{x}\}$ where $\mathbf{x}\in\mathbb{R}^2$ and  $|\mathcal{B}_2| = 5.6 \cdot 10^6$ contains records of calls issued from different locations in Milan \cite{callsdata}\footnote{The complete set of results were omitted due to space restrictions, we note that the results were similar to real dataset $\mathcal{B}_1$}.  We create a synthetic workload $\mathcal{T}$ containing $m = 5\cdot{10^4}$ queries and their answers, i.e., $\{(\mathbf{q},y)_{i}\}_{i=1}^{m}  = \mathcal{T}$. Each query is a 3-d vector $\mathbf{q} = [\mathbf{x}, \theta]$ with answer $y$ where $\mathbf{x} = [x_1,x_2] \in\mathbb{R}^2$ is the center, $\theta\in\mathbb{R}$ is its radius and $y\in\mathbb{R}$ the result against real dataset $\mathcal{B}_1$. We scale $\mathcal{B}_1$ and $\mathcal{T}$ in all their dimensions, restricting their values in $[0,1]$. 
We use workload $\mathcal{T}$ for \textit{Pre-Processing} and \textit{Training} and create a \textit{separate} evaluation set $\mathcal{V}$ containing $|\mathcal{V}| = 0.2\cdot m$ new query-answer pairs. 

\textbf{Query Workload ($\mathcal{T},\mathcal{V}$) Distributions:}
It is important to mention that no real-world benchmarks exist for exploratory analytics in general; hence we resort to synthetic workloads. The same fact is recognized in \cite{WWDI17}. For completeness, therefore, we will also use their workload as well in our evaluation. Please note that the workload from \cite{WWDI17} depicts different exploration scenarios, based on zooming-in queries. As such, they represent a particular type of exploration, where queried spaces subsume each other, representing a special-case scenario for the overlaps in the workloads for which XAXA was designed. However, as it follows the same general assumptions we include it for completeness and as a sensitivity-check for XAXA workloads. 
We first generated $\mathcal{T},\mathcal{V}$ with our case studies in mind. We used 2-dim./1-dim. Gaussian and 2-dim./1-dim. Uniform distributions for query locations $\mathbf{x}$ and radius $\theta$ of our queries, respectively. Hence, we obtain four variants of workloads, 
as shown in Table \ref{tab:distros}. We also list the workload borrowed from \cite{WWDI17} as DC(U).\footnote{$|D|$ is the cardinality of the data-set generated by \cite{WWDI17}.}
\begin{table}
\centering
\caption{Synthetic Query Workloads $\mathcal{T},\mathcal{V}$.}
\begin{tabular}{ |c|c|c| } 
 \hline
 & \textbf{Query Center} $\mathbf{x}$ & \textbf{Radius} $\theta$ \\ \hline
 \textit{Gauss-Gauss} & $\sum_{i=1}^C\mathcal{N}(\mathbf{\mu}_i, \Sigma_i)$ & $\sum_{i=1}^J\mathcal{N}(\mathbf{\mu}_i, \sigma^2_i)$ \\ \hline
 \textit{Gauss-Uni} & $\sum_{i=1}^C\mathcal{N}(\mathbf{\mu}_i, \Sigma_i)$ & $\sum_{i=1}^J U(v_i, v_i+0.02)$ \\ \hline
 \textit{Uni-Gauss} & $\sum_{i=1}^C U(v_i, v_i+0.04)$ & $\sum_{i=1}^J\mathcal{N}(\mathbf{\mu}_i, \sigma^2_i)$ \\ \hline
 \textit{Uni-Uni} & $\sum_{i=1}^C U(v_i, v_i+0.04)$ & $\sum_{i=1}^J U(v_i, v_i+0.02)$ \\ 
 \hline
DC (U)  & $U(0, |D|)$ & $U(0, |D|)$ \\ \hline
\end{tabular}
\label{tab:distros}
\vspace{-13pt}
\end{table}

The parameters $C$ and $J$ signify the number of mixture distributions within each variant. Multiple distributions comprise our workload as we desire to simulate the analysts' behavior issuing queries against the real dataset $\mathcal{B}_1$. For instance, given a number $C\times J$ of analysts, each one of them might be assigned to different geo-locations, hence parameter $C$, issuing queries with different radii, hence $J$, given their objectives. For the parameters $\mathbf{\mu}$ (mean) and $v$ of the location distributions in Table \ref{tab:distros}, we select points uniformly at random ranging in [0,1]. 
The covariance $\Sigma_i$ of each Gaussian distribution, for location $\mathbf{x}$, is set to $0.0001$. 
The number of distributions/query-spaces was set to $C\times J$, where $C \in \mathbb{N}$ and $J \in \mathbb{N}$ with $C=5$ and $J=3$, thus, a mixture of 15 query distributions. The radii covered $2\%-20\%$ of the data space, i.e., $\mu_i$ for $\theta$ was randomly selected in $[0.02, 0.2]$ for Gaussian distributions, with small $\sigma^2$ for the Gaussian distributions, and $v_i$ was in $[0.02, 0.18]$ for Uniform distributions. Further increasing that number would mean that the queries generated would cover a much greater region of the space, thus, making the learning task much easier. The variance $\sigma^2_i$ for  Gaussian distributions of $\theta$ was set to $0.0009$ to leave no overlap with the rest of the $J-1$ distributions. We deliberately leave no overlapping within $\theta$ distributions as we assume there is a multi-modal mixture of distributions with different radii used by analysts. Parameters $z$ and $\alpha$ were set to $z=0.5$ to ensure equal importance during the query assignment step and stochastic gradient learning schedule $\alpha=0.01$.

Our implementation is in Python 2.7. Experiments ran single threaded on a Linux Ubuntu 16.04 using an i7 CPU at 2.20GHz with 6GB RAM. 
For every query in the evaluation set $\mathcal{V}$, we take its radius and generate $n$ evenly spaced radii over the interval $[0.02, \theta']$, where $0.02$ is the minimum radius used. For query $q = [\mathbf{x}, \theta]$ we generate and find the answer for $n$ sub-queries, $ST = \{([\mathbf{x}, \theta_1], y_1), \cdots ([\mathbf{x},\theta_n], y_n) \}$. The results of these queries constitute the \textit{Actual Explanation} (AE) and hence the \textit{true} function is represented as a collection of $n$ pairs of sub-radii and $y$, $\{(\theta_i, y_i)\}_{i=1}^n$. The approximated function is the collection of $n$ sub-radii and predicted $\hat{y}$, $\{(\theta_i, \hat{y}_i)\}_{i=1}^n$.

\textbf{Evaluation \& Performance Metrics:}

\textbf{Information Theoretic:} EEL is measured using the Kullback--Leibler (KL) divergence. Concretely, the result is a scalar value denoting the amount of \textit{information loss} when one chooses to use the approximated explanation function. The EEL with KL divergence is then defined as:
\begin{eqnarray}
\mathcal{L}(f(\theta;\mathbf{x}), \hat{f}(\theta;\mathbf{x})) & = & KL(p(y;\theta,\mathbf{x})||\hat{p}(y;\theta,\mathbf{x})) \\ \nonumber 
& = &\int{p(y;\theta,\mathbf{x})\log{\frac{p(y;\theta,\mathbf{x})}{\hat{p}(y;\theta,\mathbf{x})}}dy},
\end{eqnarray}
with $p(y;\theta,\mathbf{x})$ and $\hat{p}(y;\theta,\mathbf{x})$ being the probability density functions of the true and approximated result, respectively.

\textbf{Goodness of Fit:} The EEL is measured here using the coefficient of determination $R^2$. This metric indicates how much of the \textit{variance} generated by $f(\theta;\mathbf{x})$ can be explained using the approximation $\hat{f}(\theta;\mathbf{x})$. 
This represents the goodness-of-fit of an approximated explanation function over the actual one. It is computed using : $R^2 = 1 - \frac{\sum_i{(y_i - \hat{y}_i)^2}}{\sum_i{(y_i- \overline{y})^2}}$, in which the denominator is proportional to the \textit{variance} of the \textit{true} function and the numerator are the residuals of our approximation. The EEL between 
$f$ and $\hat{f}$ explanations can then be computed as $\mathcal{L}(f(\theta;\mathbf{x}), \hat{f}(\theta;\mathbf{x})) = 1-R^2$. 

\textbf{Model-Based Divergence:} This measures the \textit{similarity} between two explanation functions $f$ and $\hat{f}$
by calculating the inner product of their parameter vectors $\mathbf{a}$ and $\hat{\mathbf{a}}$. The parameter vectors are the \textit{slopes} found using the \textit{n} $(\theta, y)$/$(\theta, \hat{y})$ pairs of the \textit{true} and approximated function respectively. We adopt the \textit{cosine similarity} of the two parameter vectors:
\begin{eqnarray}\label{eq:cosine_similarity}
sim(f(\theta;\mathbf{x}), \hat{f}(\theta;\mathbf{x})) = \frac{\mathbf{a} \cdot \hat{\mathbf{a}}}{\lVert \mathbf{a} \rVert \lVert \hat{\mathbf{a}} \rVert}
\end{eqnarray}
The rates of change/slopes of the \textit{true} and approximated functions (and hence the functions themselves)  are interpreted as being identically similar if: $sim(f(\theta;\mathbf{x}), \hat{f}(\theta;\mathbf{x})) = 1$ and $-1$ if diametrically opposed.
. For translating that metric into $\mathcal{L}$ one would simply negate the value obtained by $\mathcal{L}(f(\theta;\mathbf{x}), \hat{f}(\theta;\mathbf{x})) = 1-sim(f(\theta;\mathbf{x}), \hat{f}(\theta;\mathbf{x}))$.

\textbf{Predictive Accuracy:}
To completely appreciate the accuracy and usefulness of XAXA, we also quantify the predictive accuracy associated with explanation (regression) functions. We employ the \textit{Normalized-Root-Mean-Squared-Error(NRMSE)} for this purpose: $\text{NRMSE} = \frac{1}{y_{max}-y_{min}}(\frac{1}{n}\sum_i^n (y_i - \hat{y}_i)^2)$.
 Essentially, this shows how accurate the results would be if an analyst used the explanation (regression) functions for further data exploration, without issuing DB queries.


\subsection{Experimental Results: Accuracy}
For our experiments we chose to show performance and accuracy results over two representative aggregate functions: \texttt{COUNT} and \texttt{AVG} due to their extensive use and because of their properties, with \texttt{COUNT} being monotonically increasing and \texttt{AVG} being non-linear. 
As baselines do not exist for this kind of problem, we demonstrate the accuracy of our framework with the bounds of the given metrics as our reference. Due to space limitations, we show only representative results.
\begin{figure}
\centering
\begin{tabular}{cc}
  \includegraphics[width=40mm]{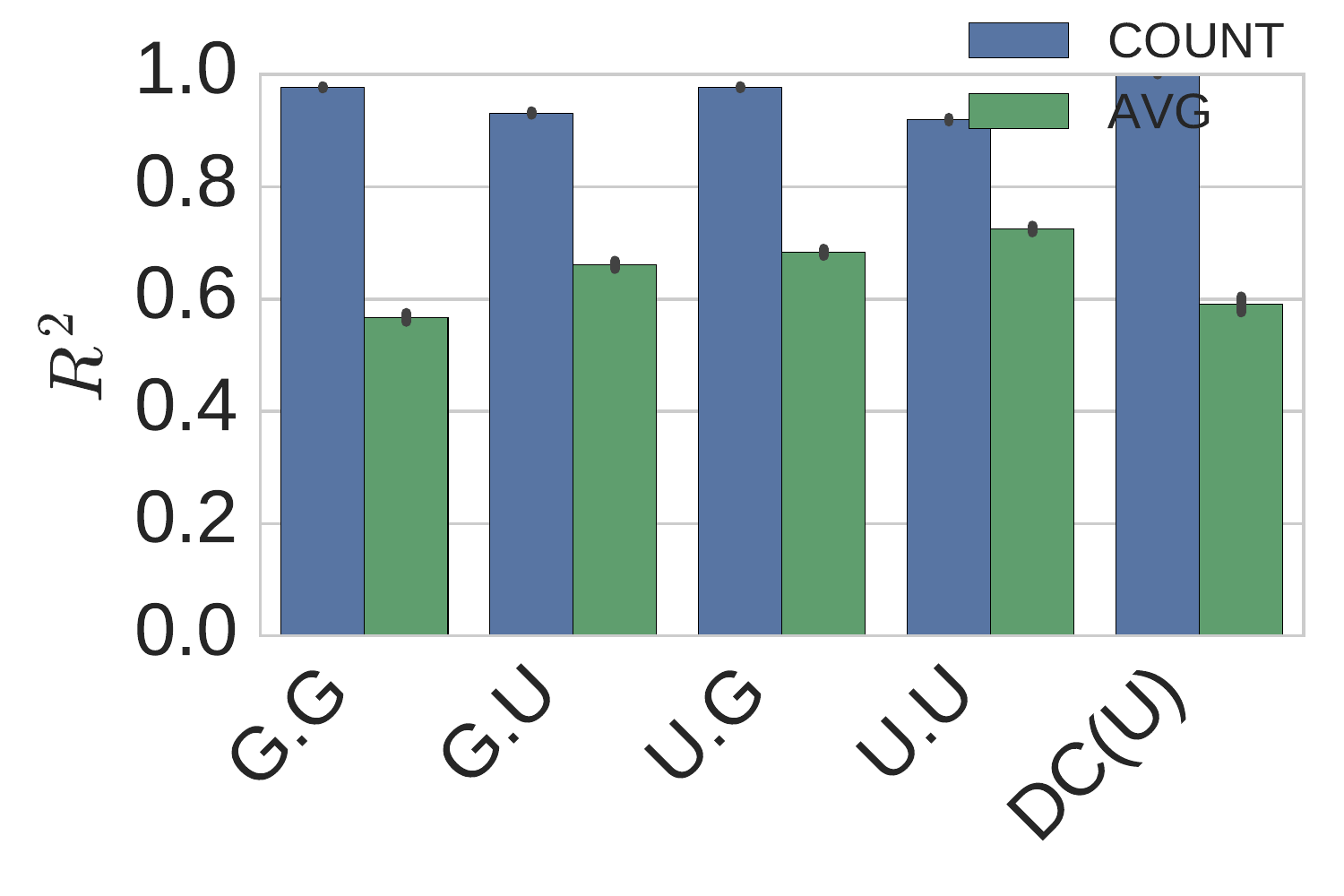}  &
\includegraphics[width=40mm]{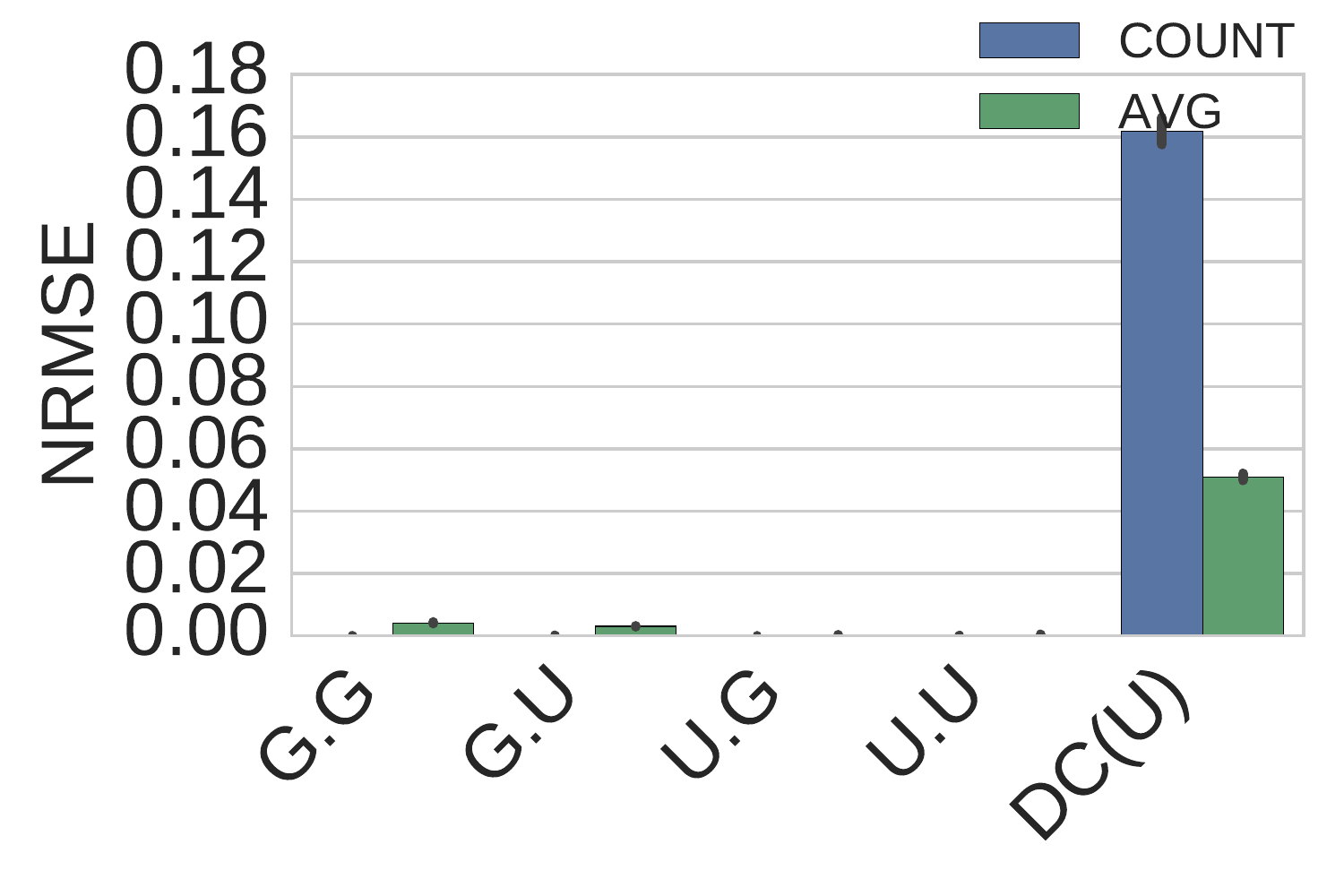}\\
(a) Results for $R^2$. & (b) Results for \textit{NRMSE}.
\end{tabular}
\caption{Goodness of fit and predictive accuracy results.}
\label{fig:results-all}
\end{figure}

\begin{figure}
\centering
  \includegraphics[height=3.5cm,width=6.0cm]{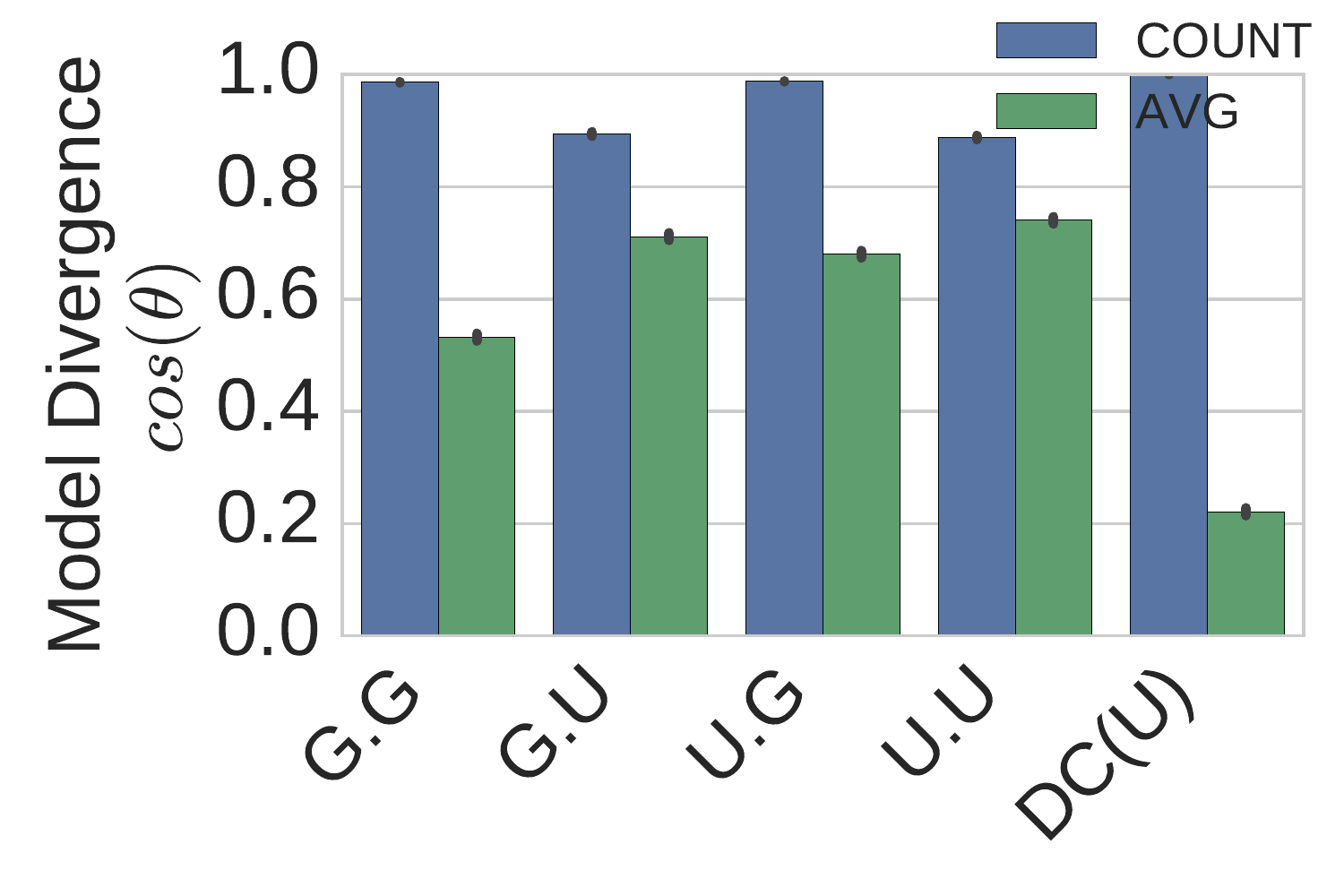}  
\caption{Results for Model Divergence $\cos(\theta)$.}
\label{fig:results-model-div}
\end{figure}

\begin{figure}
\centering
  \includegraphics[height=3.5cm,width=6.0cm]{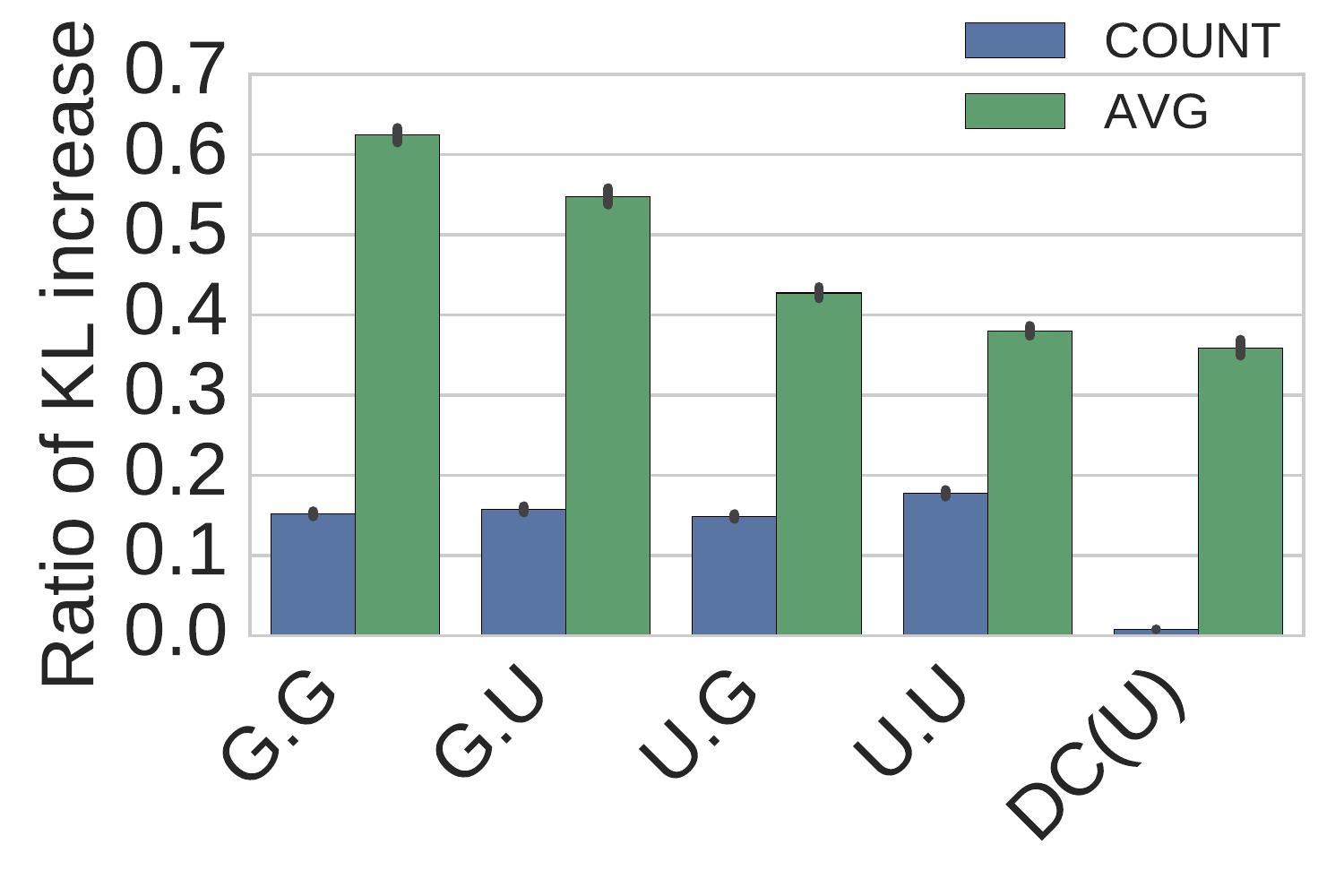}  
\caption{Results for Information Theoretic metric.}
\label{fig:results-kl}
\end{figure}
\textbf{Goodness of Fit}
Figure \ref{fig:results-all}(a) shows the results for $R^2$ over all workloads for \texttt{AVG} and \texttt{COUNT}. We report on the \textit{average} $R^2$ found by evaluating the explanation functions given for the evaluation set $\mathcal{V}$. Overall, we observe high accuracy across all workloads and aggregate functions. Meaning that our approximate explanation function can explain most of the variation observed by the \textit{true} function. Figure \ref{fig:results-all}(a) also shows the \textit{standard-deviation} for $R^2$ within the evaluation set, which appears to be minimal.
Note that accuracy for \texttt{COUNT} is higher than for \texttt{AVG}; this is expected as \texttt{AVG} fluctuates more than \texttt{COUNT}, the latter being monotonically increasing. 

If the \textit{true} function is highly non-linear, the score for this metric deteriorates as witnessed by the decreased accuracy for \texttt{AVG}. Hence, it should be consulted with care for non-linear functions this is why we also provide measurements for \textit{NRMSE}. Also, it is important to note that if the \textit{true} function is constant, meaning the rate of change is $0$, then the score returned by $R^2$ is $0$. Thus, in cases where such \textit{true} functions exist, $R^2$ should be avoided as it would incorrectly label our approximation as inaccurate. We have encountered many such \textit{true} functions especially for small radii in which no further change is detected.

\textbf{Predictive Accuracy}
As statistical measures for model fitness can be hard to interpret, Figure \ref{fig:results-all}(b) provides results for \textit{NRMSE} when using XAXA explanation (regression) functions for predictions to AQ queries (avoiding accessing the DBMS).
Overall, the predictive accuracy is shown to be excellent for all combinations of Gaussian-Uniform distributed workloads and for both aggregate functions. Even for the worst-case workload of DC(U), accuracy (NRMSE) is below 4\% for \texttt{AVG} and below 18\% for \texttt{COUNT}. As such, an analyst can consult this information to decide on whether to use the approximated function for subsequent computation of AQs to gain speed, instead of waiting idle for an AQ to finish execution.

\textbf{Information Theoretic}
The bar-plot in Figure \ref{fig:results-kl} shows the ratio of bit increase when using the predicted distribution given by the $n$ predictions rather than the $n$ actual values for $y$. This indicates the inefficiency caused by using the approximated explanations, instead of the \textit{true} function. This information, theoretically, allows us to make a decision on whether using such an approximation would lead to the propagation of errors further down the line of our analysis.  We observe that the ratio is high for \texttt{AVG} as its difficulty to be approximated is also evident by our previous metrics. Although some results show the ratio to be high, we note that such inefficiency is tolerable during exploratory analysis in which speed is more important.

\textbf{Model-Based Divergence}
Figure \ref{fig:results-model-div} show the similarity between the true underlying function and the approximation generated by XAXA. Our findings are inline with the rest of the accuracy experiments exhibited in previous sections of our accuracy results. This is an important finding as it shows that the similarity between the shapes of the functions is high. It informs us that the user can rely on the approximation to identify points in which the function behaves non-linearly. If we rely just on the previous metrics this can go unnoticed.

\subsection{Additional Real-World Dataset}
The dataset $\mathcal{B}_2= \{\mathbf{x}\}$ where $\mathbf{x}\in\mathbb{R}^2$ and  $|\mathcal{B}_2| = 5.6 \cdot 10^6$ contains records of calls issued from different locations in Milan \cite{callsdata}. We  show results for workload (\textit{Gauss-Gauss}) on \texttt{Average} and \texttt{COUNT}. 
For space reasons we report only on $R^2$. Figure \ref{fig:calls_r2} shows the same patterns as in Figure \ref{fig:r2_eval}, where \texttt{COUNT} is easier to approximate than \texttt{AVG}.
\begin{figure}[!htbp]
\begin{center}
\includegraphics[height=3.5cm,width=7.0cm]{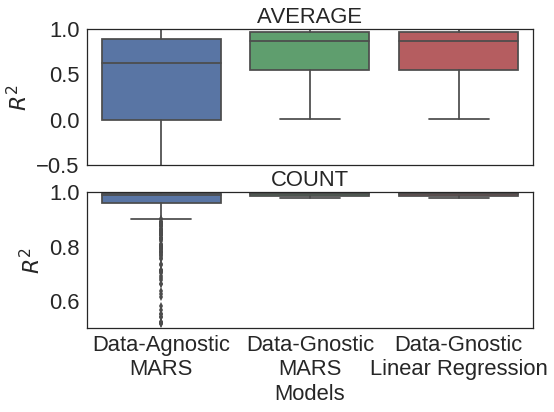}
\vspace{-10pt}
\caption{Coefficient of Determination for $R_2$ (Milan dataset)}
\label{fig:calls_r2}
\end{center}
\end{figure}
Overall, XAXA enjoys high accuracy, learning a number of diverse explanations under many different workloads and datasets. 
The obtained results for the rest of the metrics and variant workloads are omitted due to space constraints. However, they all follow the same pattern as the results given in our experiments.
\subsection{Experimental Results: Efficiency and Scalability}
\label{appendix:efficiency}
\subsubsection{XAXA is invariant to data size changes}
XAXA remains invariant to data size changes as shown in Figure \ref{fig:scalability-xaxa} (b). Dataset $\mathcal{B}_1$ was scaled up by increasing the number of data points in $\mathcal{B}_1$ respecting the original distribution. As a reference we show the time needed to compute aggregates for $\mathcal{B}_1$ using a DBMS/Big Data engine.

\subsubsection{Scalability of XAXA}
We show that XAXA scales linearly with some minor fluctuations attributed to the inner workings of the system and different libraries used. Evidently, both \textit{Pre-Processing} stage and \textit{Training} stage do not require more than a few minutes to complete. We note that \textit{Training} stage happens on-line, however we test the stage off-line to demonstrate its scalability; results are shown in Figure \ref{fig:scalability-xaxa}(a).

\begin{figure}

\begin{tabular}{ll}
  \includegraphics[width=40mm]{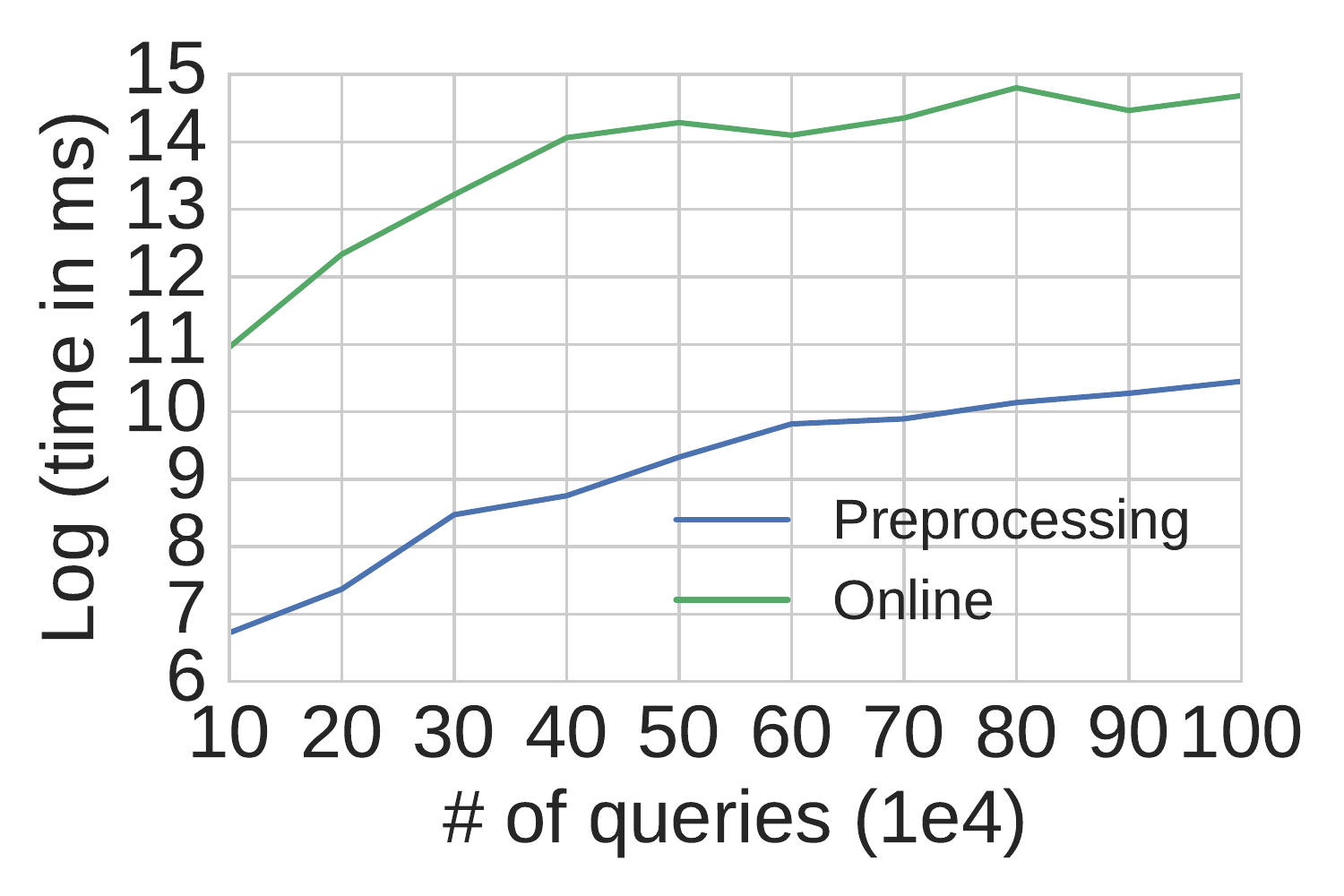} &   \includegraphics[width=38mm]{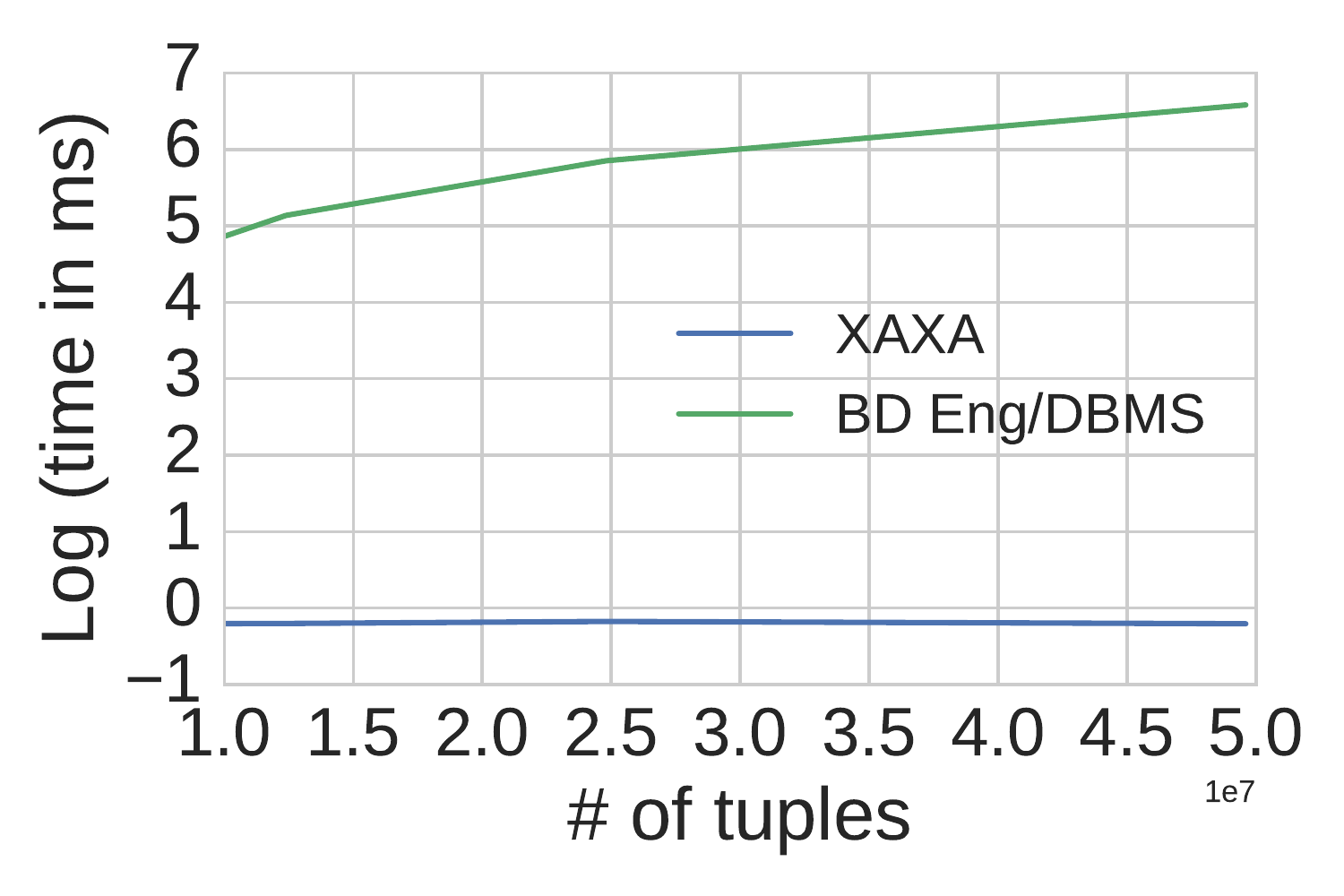} \\
(a) \textit{Pre-Processing} \& \textit{Training}. & (b) Time vs. $|\mathcal{B}_1|$ increase. \\[6pt]
\end{tabular}

\caption{Scalability of XAXA in terms of explanation time and data set size.}
\label{fig:scalability-xaxa}
\end{figure}

\subsection{Experimental Results: Sensitivity to Parameters}
We evaluated the sensitivity on central parameters $K$ and $L$. To do that we measure the \textit{SSQE} defined in (\ref{eq:ssqe}) and $R^2$ using using \textit{Gauss-Gauss} workload over $\mathcal{B}_1$. Results are shown in Figure \ref{fig:sensitivity-parameters}.
As evident SSQE is a good indication of the accuracy of our model thus sufficiently minimizing it results in better explanations. We also, show the effectiveness of our heuristic algorithm in finding a near-optimal parameter for $K$ and $L$  by showing the difference in SSQE between successive clustering (bottom plot). 
Thus, by setting our threshold $\epsilon$ at the value indicated by the vertical red line in the graph, we can get near-optimal parameters. 
Figure \ref{fig:sensitivity-parameters} shows that once $K$ is large enough, sensitivity is minimal. However, small values for $K$ are detrimental and values below $5$ are not shown for visual clarity. As for $L$ we note that in this case the sensitivity is minimal with our algorithm still being able to find a near-optimal value for $L$. We do note, that in other cases where the distributions for $L$ are even further apart sensitivity is increased. We omit those results due to space constraints. 
\begin{figure}[!htbp]
\begin{center}
\includegraphics[height=2.5cm,width=5.5cm]{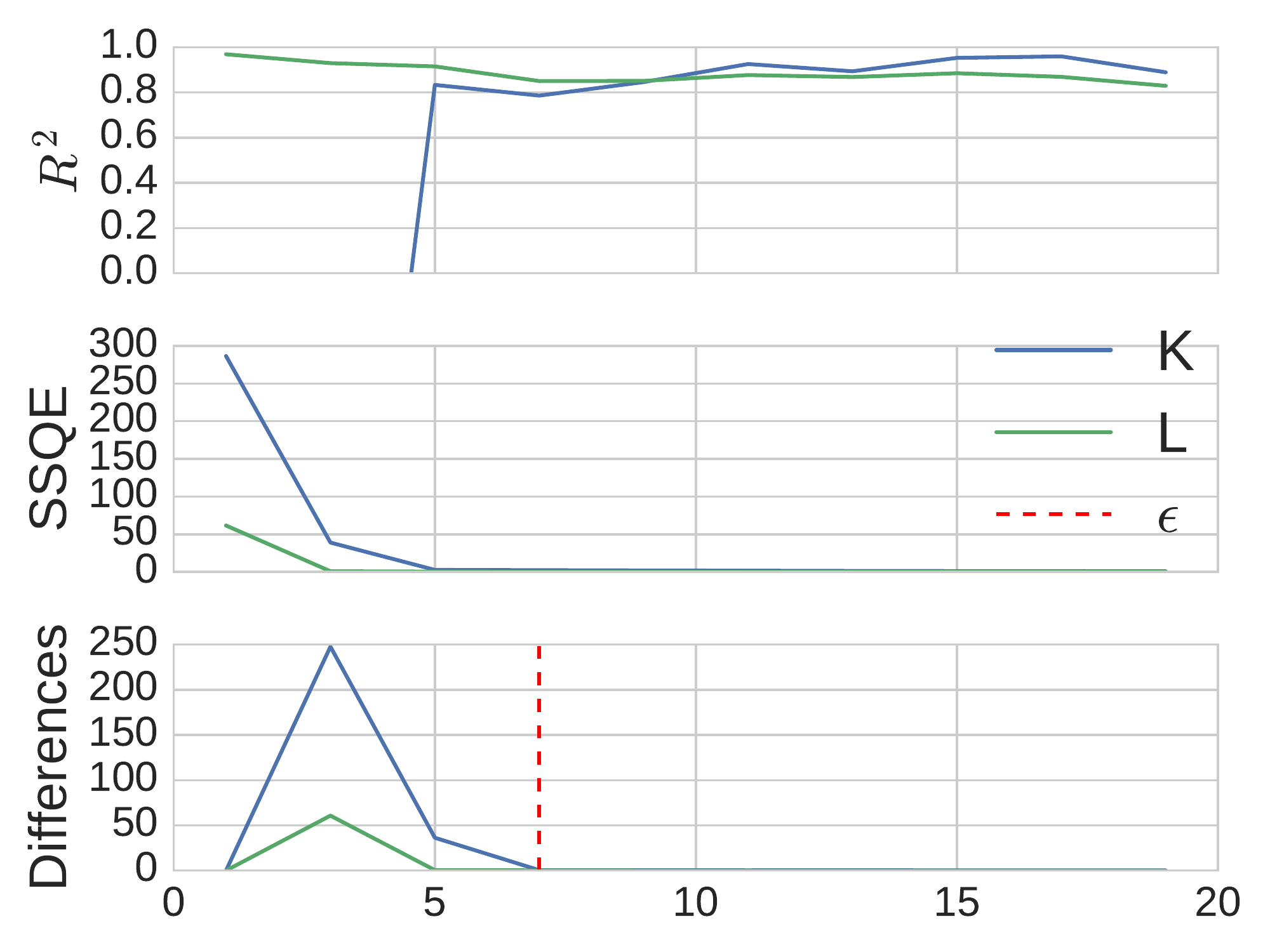}
\caption{Sensitivity to parameters $K$ and $L$. Parameter $e$ is the pre-defined threshold.}
\label{fig:sensitivity-parameters}
\end{center}
\end{figure}
\section{Conclusions}
We have defined a novel class of explanations for Aggregate Queries, which are of particular importance for exploratory analytics. The proposed AQ explanations are succinct, assuming the form of functions. As such, they convey rich information to analysts about the queried data subspaces and as to how the aggregate value depends on key parameters of the queried space. Furthermore, they allow analysts to utilize these explanation functions for their explorations without the need to issue more queries to the DBMS.
We have formulated the problem of deriving AQ explanations as a joint optimization problem and have provided novel statistical/machine learning models for its solution. The proposed scheme for computing explanations does not require DBMS data accesses ensuring efficiency and scalability. 

\section{Acknowledgments}
This work is funded by the EU/H2020 Marie Sklodowska-Curie (MSCA-IF-2016) under the INNOVATE project; Grant$\#$745829. Fotis Savva is funded by an EPSRC scholarship, grant number 1804140. Christos Anagnostopoulos and Peter Triantafillou were also supported by EPSRC grant EP/R018634/1

\bibliographystyle{abbrv}
\bibliography{bibliography}


\section{Additional Metrics}
\subsection{Alternative to Coefficient of Determination}
It has been noted that $R^2$ is susceptible to outliers \cite{legates1999evaluating}. Therefore we also use an alternative to $R^2$ suggested in Legates \emph{et al.} which uses the absolute value instead of the squared value. It has the following form
\begin{equation}\label{eq:r2_A}
R^2 = 1 - \frac{|y - \hat{y}|}{|y - \overline{y}|}
\end{equation}
and it used to get a more accurate estimate for $R^2$ in cases where outliers have huge impact on its value.
\subsubsection{Result for Alternative to $R^2$}
\begin{figure}[!htbp]
\begin{tabular}{cc}
  \includegraphics[width=40mm]{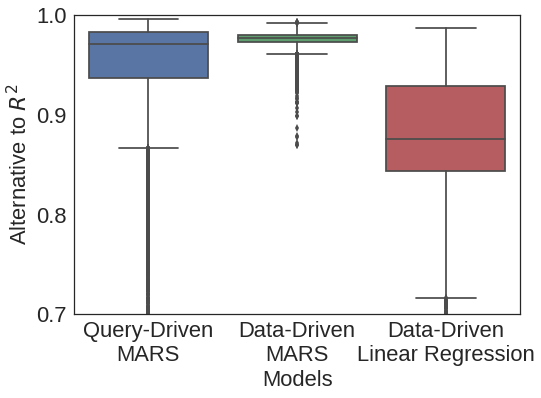} &   \includegraphics[width=40mm]{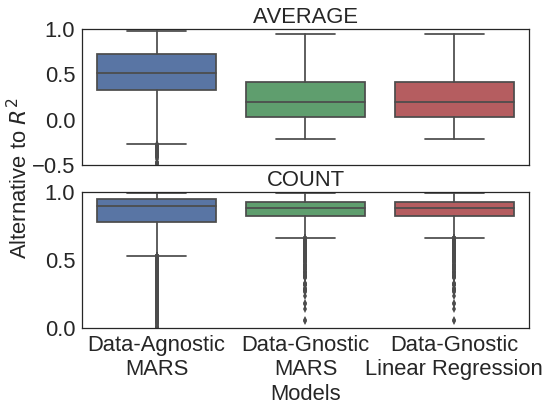} \\
(a) Gaussian $\mathbf{x}$ - Gaussian $\theta$ & (b) Gaussian $\mathbf{x}$ - Uniform $\theta$ \\[6pt]
 \includegraphics[width=40mm]{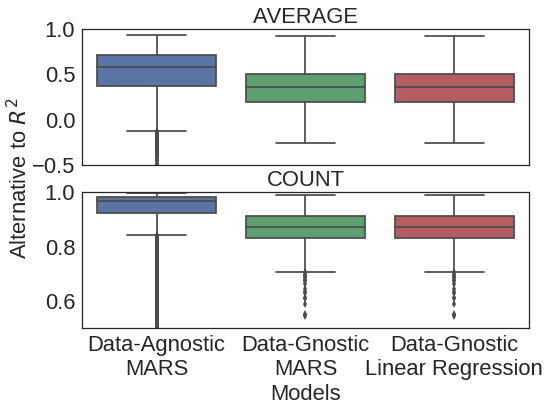} &   \includegraphics[width=40mm]{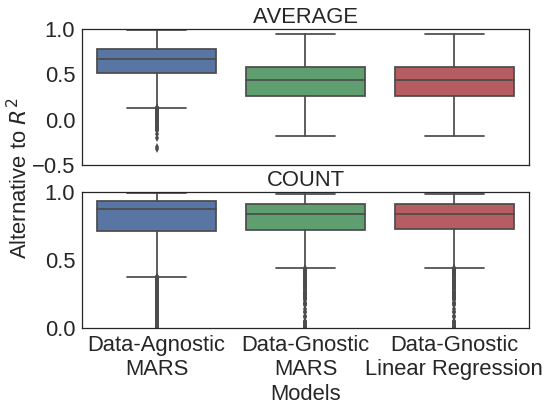} \\
(c) Uniform $\mathbf{x}$ - Gaussian $\theta$ & (d) Uniform $\mathbf{x}$ - Uniform $\theta$ \\[6pt]
\end{tabular}
\caption{Results for Alternative to Coefficient of Determination metric}
\label{fig:eval_alt_r2}
\end{figure}
Examining the generated boxplots, figure \ref{fig:eval_alt_r2}, for the Alternative to $R^2$ and comparing with the ones for $R^2$ we see that the results slight decreased across all models, aggregation operators and datasets. This is due to the fact that $R^2$ tends to inflate the results.
\paragraph{Main point}
Results observed are different than $R^2$ but this is expected and the behavior of the models across both aggregation operators and datasets is the same.


\section{Proofs of Theorems}
\subsection{Proof of Theorem \ref{theorem:1}}
\label{appendix:theorem:1}
We adopt the Robbins-Monro stochastic approximation for minimizing the combining distance-wise and prediction-wise loss given a L2 RR $u$, that is minimizing $\mathcal{E}(u) = \mu |\theta - u| + (1-\mu)(y - \hat{f}(\theta;\mathbf{w}))^{2}$, using SGD over $\mathcal{E}(u)$.
Given the $t$-th training pair $(\mathbf{q}(t), y(t))$, 
the stochastic sample $E(t)$ of $\mathcal{E}(u)$ has to decrease at each new pair at $t$ by descending in the direction of its negative gradient with respect to $u(t)$. 
Hence, the update rule for L2 RR $u$ is derived by: 
\[
\Delta u(t) = -\alpha(t) \frac{\partial E(t)}{\partial u(t)},
\]
where scalar $\alpha(t)$ satisfies $\sum_{t=0}^{\infty}\alpha(t) = \infty$ and $\sum_{t=0}^{\infty}\alpha(t) < \infty$. 
From the partial derivative of $E(t)$ 
we obtain $\Delta u \leftarrow \alpha \mu \mbox{sgn}(\theta - u)$. 
By starting with arbitrary initial training pair $(\mathbf{q}(0), y(0))$, 
the sequence $\{u(t)\}$ converges to optimal 
L2 RR $u$ parameters. 

\subsection{Proof of Theorem \ref{theorem:2}}
\label{appendix:theorem:2}
Consider the optimal update rule in (\ref{eq:sgd}) for L2 RR $u$
and suppose that $u$ has reached equilibrium, i.e., $\Delta u = 0$ holds with probability 1. By taking the expectations of both sides and replacing $\Delta u$ with the update rule from Theorem \ref{theorem:1}:
\begin{eqnarray*}
\mathbb{E}[\Delta u] & = & \int_{\mathbb{R}}\mbox{sgn}(\theta- u)p(\theta)\mbox{d}\theta =  P(\theta \geq u) \int_{\mathbb{R}}p(\theta)\mbox{d}\theta \\
& - & P(\theta < u) \int_{\mathbb{R}}p(\theta)\mbox{d}\theta = 2P(\theta \geq u)-1.
\end{eqnarray*}
Since $\Delta u = 0$ thus $u$ is constant, then $P(\theta \geq u) = \frac{1}{2}$, which denotes that $u$ converges to the median of radii for those queries represented by L1 RL and the associated L2 RR. 

\section{Near Optimal $k$ for $K$-means}
\label{algo:near-optimal}
We provide an algorithm for near-optimal $k$-means we note that there are multiple such algorithms available in the literature \cite{hamerly2004learning}. However, this is not part of our focused work thus a simple solution was preferred to alleviate such problem. In addition, our work can be utilized with any kind of Vector-Quantization algorithm thus $k$-means was chosen because of its simplicity and wide use.
\begin{algorithm}[!htb]
\caption{Estimating a near-optimal $K$}
\begin{algorithmic}[H]
\State Input: $\epsilon$; $K$ \Comment{initial $K$; predefined improvement threshold.}
\State $\mathcal{W} = \emptyset$; $\mathcal{X} = \{\mathbf{x}_{i}\}_{i=1}^{m}: \mathbf{x} \in \mathcal{T}$ \Comment{set of LRs; query centers $\mathcal{T}$}
\While{\texttt{TRUE}}
	\State $\mathcal{W} \gets \mathit{KMeans(K,\mathcal{X})}$
    \Comment{call K-Means algorithm with $K$ LRs}
    \State SSQE $\gets \sum_{i=1}^{m}\min\limits_{\mathbf{w_k} \in \mathcal{W}}(\lVert\mathbf{x}_{i}-\mathbf{w}_{k}\rVert^2)$
    \Comment{Calculate SSQE}
    \If{$\Delta |SSQE| > \epsilon$} \Comment{improvement}
        \State $K \gets K + 1$ \Comment{increase $K$} 
    \Else 
        \State \textbf{break} \Comment{no more improvement; exit}
    \EndIf
\EndWhile
\State Return: $\mathcal{W}$ \Comment{set of $K$ LRs}
\end{algorithmic}
\label{alg:algorithm_near}
\end{algorithm}
The algorithm is fairly straight-forward and is in-line with a method called the \"Elbow Method\" in approximating an optimal $K$ for $K$-means. It essentially performs several passes over the data, applying the algorithm and obtaining an SSQE. It stops until a pre-defined threshold has been reached.

\end{document}